\newcommand{\Letter}[2]{\ifthenelse{\boolean{letter}}{#1}{#2}}
\newcommand{\hypercolor}{blue}
\newcommand{\InsertPDF}[2]{\iffigure\includegraphics[scale=#1]{figures/#2}\fi}
\newtheorem{thmMain}{Theorem}
\newtheorem{thm}{Theorem}
\newtheorem{postulateno}{Postulate}
\newtheorem{lemma}[thm]{Lemma}
\newcounter{proof}
\NewDocumentEnvironment{proof}{o}
 {
  \par\medskip
  \noindent
  \textbf{Proof~}
 }
 {\QED\par\smallskip}
\newcounter{postulate}
\renewcommand{\thepostulate}{\arabic{postulate}}
\NewDocumentEnvironment{postulate}{oo}
 {
  \refstepcounter{postulate}
  \begin{postulateno}
  \textbf{\hspace{-0.5em}\IfNoValueTF{#2}{\thepostulate}{#2} ~(\IfNoValueTF{#1}{}{#1})}
 }
 {
  \end{postulateno}
 }
\newcommand{\mC}{\mathcal{C}}
\newcommand{\mD}{\mathcal{D}}
\newcommand{\mE}{\mathcal{E}}
\newcommand{\mH}{\mathcal{H}}
\newcommand{\mL}{\mathcal{L}}
\newcommand{\mQ}{\mathcal{Q}}
\newcommand{\mS}{\mathcal{S}}
\newcommand{\mT}{\mathcal{T}}
\newcommand{\mU}{\mathcal{U}}
\newcommand{\mX}{\mathcal{X}}
\newcommand{\ident}{\hat{1}}
\newcommand{\Real}{\mathbb{R}}
\newcommand{\Complex}{\mathbb{C}}
\newcommand{\QED}{\hspace*{0pt}\hfill $\blacksquare$}
\DeclareMathOperator*{\argmax}{argmax}
\DeclareMathOperator*{\argmin}{argmin}
\renewcommand{\i}{\mathbf{i}}
\newcommand{\T}{\mathsf{T}}
\DeclareMathOperator{\Tr}{Tr}
\newcommand{\Trp}[1]{\mathop{\mathrm{Tr}_{#1}}}
\DeclareMathAlphabet{\mymathbb}{U}{BOONDOX-ds}{m}{n}
\newcommand{\kket}[1]{|#1\rangle\!\rangle}
\newcommand{\bbra}[1]{\langle\!\langle#1|}
\newcommand{\titlename}{Identifying quantum change points for Hamiltonians}
\newcommand{\supplementaltitle}{{\large \bf Supplemental Material for ``\titlename''} \\}
\newcommand{\supplementalauthor}[1]{\vspace*{0.5cm}#1 \\}
\newcommand{\supplementalaffiliation}[1]{\textit{#1}}
\newcommand{\citesupF}[1]{\Letter{Sec.~\ref{#1} of the Supplemental Material (SM)}{Appendix~\ref{#1}}}
\newcommand{\citesup}[1]{\Letter{Sec.~\ref{#1} of the SM}{Appendix~\ref{#1}}}
\newcommand{\Pna}{P_{\mathrm{na}}}
\newcommand{\ChoiE}{\mathfrak{E}}
\newcommand{\ChoiD}{\mathfrak{D}}
\newcommand{\ChoiU}{\mathfrak{U}}
\renewcommand{\ident}{\mathbbm{1}}
\newcommand{\I}{I}
\newcommand{\V}{V}
\newcommand{\W}{W}
\newcommand{\Pos}{\mathsf{Pos}}
\newcommand{\Chn}{\mathsf{Chn}}
\newcommand{\Den}{\mathsf{Den}}
\newcommand{\Tester}{\mathsf{Tester}}
\newcommand{\Comb}{\mathsf{Comb}}
\newcommand{\Realp}{\Real_{\ge 0}}
\newcommand{\ot}{\otimes}
\newcommand{\tmC}{\tilde{\mC}}
\newcommand{\tk}{\tilde{k}}
\newcommand{\tU}{\tilde{U}}
\newcommand{\Cone}{\mathsf{Cone}}
\newcommand{\w}{\omega}
\newcommand{\pH}{p^\mathrm{H}}
\newcommand{\Ad}{\mathrm{Ad}}
\newcommand{\bin}[1]{{\llbracket #1 \rrbracket}}
\let\ast\relax
\DeclareMathOperator{\ast}{\circledast}
\setlist[enumerate]{label=\arabic*), leftmargin=3em, itemsep=0pt, parsep=0pt, labelwidth=5em}
\definecolor{memo}{RGB}{128,0,255}
\definecolor{gray}{RGB}{128,128,128}
\newcommand{\Discard}[1]{}
\newcommand{\EN}[1]{}
\begin{document}

\preprint{APS/123-QED}

\title{\titlename}

\affiliation{%
 Quantum Information Science Research Center, Quantum ICT Research Institute, Tamagawa University,
 Machida, Tokyo 194-8610, Japan
}%

\author{Kenji Nakahira}
\affiliation{%
 Quantum Information Science Research Center, Quantum ICT Research Institute, Tamagawa University,
 Machida, Tokyo 194-8610, Japan
}%

\date{\today}

\begin{abstract}
 The identification of environmental changes is crucial in many fields.
 The present research is aimed at investigating the optimal performance for detecting change points
 in a quantum system when its Hamiltonian suddenly changes at a specific time.
 Assume that the Hamiltonians before and after the change are known
 and that the prior probability of each prospective change point is identical.
 These Hamiltonians can be time-dependent.
 The problem considered in this study is an extension of the problem of
 discriminating multiple quantum processes that consist of sequences of quantum channels.
 Although it is often extremely difficult to find an analytical solution to such a problem,
 we demonstrate that the maximum success probability for the Hamiltonian change point problem
 can be determined analytically and has a simple form. 
\end{abstract}

\pacs{03.67.Hk}
\maketitle



The detection of transition points is a critical issue that emerges in many fields.
In various physical situations, the relevant environment may undergo a sudden change,
and determining the precise moment of this transformation can yield valuable insights.
This challenge, known as the change point problem, has attracted considerable
statistical research interest
\cite{Page-1954,Page-1955,Bro-Dar-1993,Car-Mul-Sie-1994}.

In this study, we address the difficulty of identifying the precise instant
when the time evolution of the quantum system experiences sudden changes.
Figure~\ref{fig:change_point} shows the schematic of this problem.
Suppose a quantum system with Hamiltonian $H_0(t)$ undergoes a sudden change to the Hamiltonian $H_1(t)$
at time $t^\star$, such as when a magnetic field is suddenly applied at time $t^\star$,
or when the frequency of the applied magnetic field undergoes a sudden change.
These Hamiltonians may be time-dependent.
The exact values of $H_0(t)$ and $H_1(t)$ are known but the point of the transition occurrence,
i.e., $t^\star$, is unknown; therefore, it is crucial to determine $t^\star$ as accurately as possible.
Let us assume that there exist a finite number of transition points and that each candidate has
an equal chance of becoming a change point.
In this situation, we may input a particle with a known initial state into the system
and perform a quantum measurement on the output from the system.
We can perform any operation allowed by quantum mechanics,
such as modifying the Hamiltonian of the system by adding an external magnetic field
or replacing a particle in the system with another one at an arbitrary time.

The upper and lower bounds for the maximum success probability in the change point problem
for quantum states were established in Ref.~\cite{Sen-Bag-Cal-Chi-2016}.
In this paper, we discuss the challenge of detecting transitions between Hamiltonians,
requiring the identification of quantum processes as a sequence of unitary channels
that represent discrete time evolution.
It is significantly more difficult to distinguish quantum processes than quantum states
because it necessitates optimizing not only output measurements but also input states
and channels used during the process.
In the context of Ref.~\cite{Sen-Bag-Cal-Chi-2016}, the problem is limited to the case of pure states,
in which assuming them to be qubit states does not lose generality.
Nevertheless, the quantum system under consideration can have dimensions beyond two.
To obtain the optimal performance, it is necessary to consider various types of
distinguishing strategies, including those that combine entangled input states with ancillary systems
and adaptive strategies.
\begin{figure}[b]
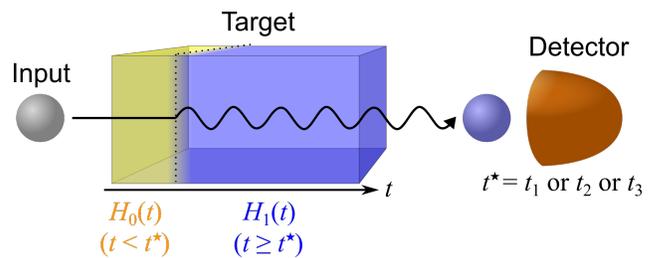

 \centering
 \InsertPDF{1.0}{change_point.png}
 \caption{Problem of discriminating change points of Hamiltonians.
 The Hamiltonian of a quantum system suddenly changes from $H_0(t)$ to $H_1(t)$ at some time $t = t^\star$.
 Assuming that the candidate change points are given, we wish to identify $t^\star$
 as accurately as possible by optimizing the state input to the system, the measurement for the output,
 and so on.}
 \label{fig:change_point}
\end{figure}

The problem of distinguishing between quantum (memoryless) channels has received significant attention
\cite{Sac-2005,Sac-2005-EB,Dua-Fen-Yin-2007,Li-Qiu-2008,Har-Has-Leu-Wat-2010,Mat-Pia-Wat-2010,
Dua-Guo-Li-Li-2016,Pir-Lup-2017,Li-Zhe-Sit-Qiu-2017,Pir-Lau-Lup-Per-2019,Kat-Wil-2020}.
Analytical solutions have already been identified for distinguishing between two simple quantum processes
or processes with significant symmetry, such as those covariant with respect to unitary operators.
However, our circumstance entails a time-dependent Hamiltonian, which cannot be categorized
as a quantum channel discrimination issue.
Moreover, to achieve our objectives, it is necessary to differentiate multiple processes
that do not exhibit high symmetry.
Thus, obtaining an analytical solution may be difficult.
In recent years, researchers have investigated the problem of distinguishing between
distinct forms of time-dependent, generalized quantum processes, also known as quantum memory channels
or quantum strategies
\cite{Gut-Wat-2007,Chi-Dar-Per-2008-memory,Gut-2012,Chi-2012,Nak-Kat-2021,Nak-Kat-2021-general,Nak-2021-restricted}.
The formulation of the optimal performance as a semidefinite programming problem is known \cite{Chi-2012},
which often proves helpful in obtaining numerical solutions.
However, as the dimension of the system and the number of candidates increase,
the computational complexity increases exponentially,
limiting the feasibility of obtaining a solution to small-scale problems.
Also, the conclusions of Ref.~\cite{Chi-2012} cannot be explicitly applied
when the candidate Hamiltonians vary continuously.

Surprisingly, in the aforementioned Hamiltonian change point problem,
the optimal performance can be obtained analytically and expressed in a simple form
for an arbitrary number of candidate change points.
We should emphasize that even in the change point problem for quantum states, which intuitively
seems to be easier to solve, an analytical solution is only found in the limiting case
where the number of candidate change points is infinite \cite{Sen-Bag-Cal-Chi-2016}.
Starting with a simplified case of determining the sudden change from a unitary channel to the next,
we examined the transition between channels.
We derive an optimal analytical solution and demonstrate that adaptive strategies and
ancillary systems are not required for optimal discrimination.
This is in contrast to the discrimination of more than two channels, which generally requires
the use of adaptive strategies in conjunction with ancillary systems
\cite{Dua-Fen-Yin-2007,Bav-Mur-Qui-2021}.
Then, this result is applied to analytically obtain an optimal solution
to the Hamiltonian change point problem.

\emph{Identification of change points for unitary channels} ---
We first formalize the task of recognizing transition points within unitary channels.
We assume a unitary channel, where the first $n$ uses correspond to $\mU_0$,
and the remaining uses correspond to $\mU_1$.
The channels $\mU_0$ and $\mU_1$ are known and $n$ can be any integer between $0$ to $N$.
The objective is to precisely determine the value of $n$.
Let $\mU_{n < k}$ be $\mU_1$ if $n < k$, and $\mU_0$ otherwise.
Let $\mE_n$ be the process consisting of a sequence of $N$ channels $(\mU_{n < 1},\dots,\mU_{n < N})$.
This problem can be formulated as differentiating between the $N+1$ processes $\mE_0,\dots,\mE_N$.
For example, in the case of $N = 2$, there is a need to distinguish between the three possible sequences:
$\mE_0 = (\mU_1, \mU_1)$, $\mE_1 = (\mU_0, \mU_1)$, and $\mE_2 = (\mU_0, \mU_0)$.
Let $\V_k$ and $\W_k$, with equal dimensions, denote the input and output systems, respectively,
for the channel $\mU_{n < k}$.

The most general discrimination strategy, presented in Fig.~\ref{fig:process_discrimination}(a),
involves ancillary systems $\V'_1,\dots,\V'_N$.
We begin by preparing a bipartite system with initial conditions of $\V_1 \ot \V'_1$.
The first segment $\V_1$ is sent through the channel $\mU_{n<1}$, followed by a channel $\sigma_2$.
Subsequently, $\V_2$ is sent through the channel $\mU_{n<2}$, followed by a channel $\sigma_3$,
until $N$ steps have been completed.
The system $\W_N \ot \V'_N$ is then subjected to a quantum measurement,
$\Pi \coloneqq \{ \Pi_m \}_{m=0}^N$.
A collection $(\rho, \sigma_2, \dots, \sigma_N, \Pi)$ can be used to define
any quantum discrimination strategy allowed by quantum mechanics, including
an entanglement-assisted and/or adaptive one.
This collection of objects is known as a quantum tester \cite{Chi-Dar-Per-2008-memory}.
We want to find a discrimination strategy that maximizes the success probability.
The problem of obtaining the maximum success probability, denoted by $P^{(N)}$,
is an optimization problem over quantum testers,
which is formulated as a semidefinite programming problem \cite{Chi-2012}.
We assume $P^{(N)} < 1$, which means that $\mU_0$ and $\mU_1$ are not perfectly distinguishable
with a single evaluation.
\begin{figure}[bt]
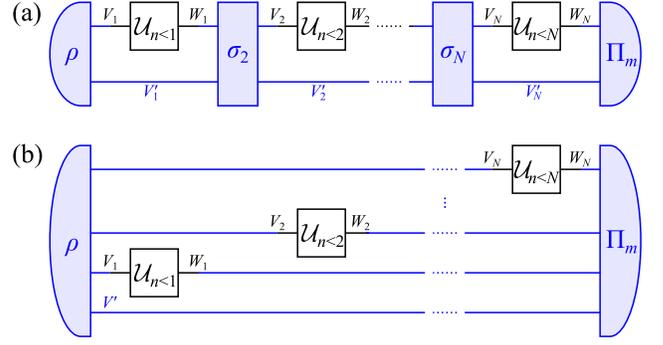

 \centering
 \InsertPDF{1.0}{process_discrimination.pdf}
 \caption{(a) The most general protocol of change point discrimination for unitary channels.
 Each process $\mE_n$ consists of a sequence of $N$ channels $(\mU_{n<1},\dots,\mU_{n<N})$.
 Any discrimination strategy is expressed as a collection of a state $\rho$,
 channels $\sigma_2,\dots,\sigma_N$, and a measurement $\{ \Pi_m \}_{m=0}^N$.
 (b) The most general nonadaptive protocol, which consists of a state $\rho$ and a measurement $\{ \Pi_m \}_{m=0}^N$.}
 \label{fig:process_discrimination}
\end{figure}

Figure~\ref{fig:process_discrimination}(b) shows the most general nonadaptive protocol,
which can be regarded as a special case of the protocol shown in Fig.~\ref{fig:process_discrimination}(a).
An initial state, $\rho$, is prepared for the multipartite system $\V_N \ot \cdots \ot \V_1 \ot \V'$
of the protocol, where $\V'$ is an ancillary system.
Subsystems $\V_1,\dots,\V_N$ are then exposed to their respective channels $\mU_{n<1},\dots,\mU_{n<N}$
and the system $\W_N \ot \cdots \ot \W_1 \ot \V'$ is subjected to a quantum measurement,
$\Pi \coloneqq \{ \Pi_m \}_{m=0}^N$.
Any nonadaptive discrimination strategy can be described using a collection $(\rho, \Pi)$.
The determination of the optimal performance by using only a nonadaptive strategy is simple;
however, the resulting performance could be inferior to those obtained using adaptive strategies.

Considering the optimal performance of nonadaptive strategies, let $\Lambda_n$ represent
the unitary channel composed of $N$ unitary channels $\mU_{n<N}, \dots, \mU_{n<1}$
connected in parallel, i.e.,
\begin{alignat}{1}
 \Lambda_n &\coloneqq \mU_{n<N} \ot \cdots \ot \mU_{n<1} = \mU_1^{\ot(N-n)} \ot \mU_0^{\ot n}.
\end{alignat}
This problem can be expressed as the following optimization problem:
\begin{alignat}{1}
 \begin{array}{ll}
  \mbox{maximize} & \displaystyle \frac{1}{N+1}
   \sum_{n=0}^N \Tr[\Pi_n \cdot (\Lambda_n \ot \ident_{V'})(\rho)], \\
 \end{array}
 \tag{$\mathrm{P_{na}}$} \label{prob:Pna}
\end{alignat}
where the maximization is taken over all possible input states $\rho$ of the system
$\V_N \ot \cdots \ot \V_1 \ot \V'$
and over all possible measurements $\Pi$ of the system $\W_N \ot \cdots \ot \W_1 \ot \V'$.
We denote the optimal value of this problem, i.e., the maximum success probability, by $\Pna^{(N)}$.
$\Pna^{(N)} \le P^{(N)}$ clearly holds.
For each $b \in \{0,1\}$, the channel $\mU_b$ is associated with a unitary matrix, $U_b$,
such that $\mU_b(\rho) = U_b \rho U_b^\dagger$.

We first consider the simplest case $N = 1$; then,
the problem is reduced to distinguishing two unitary channels $\mU_0$ and $\mU_1$ in a single trial,
and thus $P^{(1)} = \Pna^{(1)}$ holds.
$\Pna^{(1)}$ was obtained in Refs.~\cite{Chi-Pre-Ren-1999,Dar-Pre-Par-2001};
we here briefly review their results.
The maximum success probability for distinguishing two output states
$\mU_0 \ket{\psi}$ and $\mU_1 \ket{\psi}$ from a pure input state $\ket{\psi}$ is
given by
\begin{alignat}{1}
 \frac{1}{2} \left( 1 + \sqrt{1
 - \left| \braket{\psi|U_0^\dagger U_1|\psi} \right|^2} \right).
 \label{eq:P1}
\end{alignat}
The optimal input state $\ket{\psi}$ minimizes the absolute value of
the inner product of $U_0\ket{\psi}$ and $U_1\ket{\psi}$.
Let $\Gamma$ be the polygon in the complex plane whose vertices are the eigenvalues of
the unitary matrix $U_0^\dagger U_1$; then,
the distance between the polygon $\Gamma$ and the origin is equal to the minimum value of
$|\braket{\psi|U_0^\dagger U_1|\psi}|$
\footnote{Let $\ket{\lambda_k}$ be the normalized eigenvector corresponding to
the eigenvalue $\lambda_k$ of $U_0^\dagger U_1$; then,
we have $\braket{\psi|U_0^\dagger U_1|\psi} = \sum_k |\braket{\lambda_k|\psi}|^2 \lambda_k$.
Since $\Gamma$ can be represented by $\{ \sum_k w_k \lambda_k : w_k \ge 0, ~\sum_k w_k = 1 \}$,
the minimum value of $|\braket{\psi|U_0^\dagger U_1|\psi}|$ is equal to the distance between
$\Gamma$ and the origin.
This fact was noticed in footnote~[22] of Ref.~\cite{Dar-Pre-Par-2001}.}.
If $\lambda_0$ and $\lambda_1$ are the eigenvalues representing the points at both ends of
the polygon $\Gamma$ closest to the origin,
and $\ket{\lambda_0}$ and $\ket{\lambda_1}$ are the corresponding normalized
eigenvectors, then $|\braket{\psi|U_0^\dagger U_1|\psi}|$ attains its minimum value
$|\lambda_0 + \lambda_1| / 2$ at
$\ket{\psi} = \ket{\pm} \coloneqq (\ket{\lambda_0} \pm \ket{\lambda_1}) / \sqrt{2}$.
Thus, from Eq.~\eqref{eq:P1}, we obtain
\begin{alignat}{1}
 P^{(1)} &= \frac{\gamma+1}{2}, \quad
 \gamma \coloneqq \frac{1}{2} \left| \lambda_1 - \lambda_0 \right|.
\end{alignat}

We next consider the case $N = 2$, where the task is to discriminate between
$\mE_0 = (\mU_1, \mU_1)$, $\mE_1 = (\mU_0, \mU_1)$, and $\mE_2 = (\mU_0, \mU_0)$.
Let us concentrate on discrimination strategies without any ancillary system.
When $\V_2 \ot \V_1$ is prepared in an initial state $\ket{\psi}$,
the output states of the three processes are
$\ket{\psi'_0} \coloneqq U_1 \ot U_1 \ket{\psi}$,
$\ket{\psi'_1} \coloneqq U_1 \ot U_0 \ket{\psi}$,
and $\ket{\psi'_2} \coloneqq U_0 \ot U_0 \ket{\psi}$.
To minimize $|\braket{\psi'_1|\psi'_0}|$ and $|\braket{\psi'_2|\psi'_1}|$, $\ket{\psi} = \ket{+}\ket{+}$ or $\ket{\psi} = \ket{-}\ket{-}$ could be selected,
in which it seems likely that a high success probability would be obtained.
Furthermore, we find that when $\ket{\psi}$ is an entangled state represented in the following form,
$|\braket{\psi'_1|\psi'_0}|$ and $|\braket{\psi'_2|\psi'_1}|$ attain minimum values:
\begin{alignat}{1}
 \ket{\psi} &= a_1 \ket{+}\ket{+} + a_2 \ket{-}\ket{-},
 \quad |a_1|^2 + |a_2|^2 = 1.
 \label{eq:psi2}
\end{alignat}
A higher success probability may be obtained using such an entangled state.
Also, while the problem with $N = 1$ can be just reduced to minimizing the absolute value of
the inner product of $U_0 \ket{\psi}$ and $U_1 \ket{\psi}$,
the problem with $N = 2$ becomes more complicated.
In fact, there is no guarantee that the smaller the values
$|\braket{\psi'_1|\psi'_0}|$, $|\braket{\psi'_2|\psi'_1}|$,
and $|\braket{\psi'_2|\psi'_0}|$, the higher the success probability
[see \citesupF{sec:iprod}].
Moreover, the use of an ancillary system may increase the success probability.
It would also not be surprising if $P^{(2)}$ were strictly larger than $\Pna^{(2)}$.

Considering the above discussion, obtaining an analytical expression of $P^{(N)}$ for $N \ge 2$
is challenging.
However, we discovered that a nonadaptive strategy can achieve optimal discrimination for any $N$.
The following theorem provides a simple expression for the exact value of $P^{(N)}$ 
as a function of $N$ and $\gamma$.
\begin{thmMain} \label{thm:main_eq}
 In the problem of change point discrimination for unitary channels, we have
 \begin{alignat}{1}
  P^{(N)} &= \Pna^{(N)} = \frac{N\gamma + 1}{N + 1}.
 \end{alignat}
\end{thmMain}
\begin{proof}
 We present a summary of the proof; for more details, please refer to \citesup{sec:thm}.
 Consider a nonadaptive discrimination strategy in which pure state $\ket{\psi}$
 of the system $\V_N \ot \cdots \ot \V_1$ is input into the channel $\Lambda_n$.
 Assume that the measurement in an orthonormal system $\{ \ket{\pi_m} \}_{m=0}^N$ identifies
 the pure output state from the channel $\Lambda_n$, which is represented by
 $[U_1^{\ot(N-n)} \ot U_0^{\ot n}] \ket{\psi}$; then, the conditional probability,
 denoted by $p_{m|n}$, that the measurement result is $m$, given that the change point is $n$ is represented as 
 \begin{alignat}{1}
  p_{m|n} &= \left| \braket{\pi_m| [U_1^{\ot(N-n)} \ot U_0^{\ot n}] |\psi} \right|^2.
 \end{alignat}
 Let us choose
 \begin{alignat}{1}
  \ket{\psi} &= \sum_{s_1 \in \{-1,1\}} \dots \sum_{s_N \in \{-1,1\}}
  a_{s_1,\dots,s_N} \ket{\phi_{s_N}} \cdots \ket{\phi_{s_1}}
  \label{eq:psi_main}
 \end{alignat}
 with $\ket{\phi_1} \coloneqq \ket{+}$ and $\ket{\phi_{-1}} \coloneqq \ket{-}$,
 where $a_{s_1,\dots,s_N}$ is
 $\prod_{k=1}^{N-1} [(s_k s_{k+1} \sqrt{\gamma} + 1) / \sqrt{2(\gamma+1)}]$
 if the sequence $s_1,\dots,s_N$ has an even number of elements equal to $-1$, and $0$
 otherwise.
 Note that in the case of $N = 2$, such $\ket{\psi}$ is in the form of Eq.~\eqref{eq:psi2}.
 It can be seen that there exists an orthonormal system $\{ \ket{\pi_m} \}_{m=0}^N$
 satisfying
 \begin{alignat}{1}
  p_{m|n} &=
  \begin{dcases}
   \sum_{k=-\infty}^0 \mL(k - n; \zeta), & m = 0, \\
   \mL(m - n; \zeta), & 0 < m < N, \\
   \sum_{k=N}^\infty \mL(k - n; \zeta), & m = N, \\
  \end{dcases}
  \label{eq:pnm2}
 \end{alignat}
 where $\zeta \coloneqq (1-\gamma) / (1+\gamma)$ and
 $\mL(n;\zeta)$ is the probability mass function of the discrete Laplace distribution,
 given by
 \begin{alignat}{1}
  \mL(n;\zeta) &\coloneqq \frac{1-\zeta}{1+\zeta} \zeta^{|n|}, \quad 0 < \zeta < 1.
 \end{alignat}
 This nonadaptive strategy provides the success probability of
 \begin{alignat}{1}
  \frac{1}{N+1} \sum_{n=0}^N p_{n|n} &= \frac{N\gamma + 1}{N + 1} \eqqcolon q,
 \end{alignat}
 which is clearly not greater than $\Pna^{(N)}$.
 In addition, we can demonstrate that the optimal value, denoted by $D$, of the Lagrange dual
 of the change point problem is upper bounded by $q$.
 As the weak duality inequality $P^{(N)} \le D$ holds, we have
 \begin{alignat}{1}
  q &\le \Pna^{(N)} \le P^{(N)} \le D \le q,
 \end{alignat}
 and thus all these inequalities are equalities.
\end{proof}

In practice, it may be challenging to implement the entangled states described by Eq.~\eqref{eq:psi_main}.
Alternatively, consider using a separable state, $\ket{+}^{\ot N}$, as an input.
In this situation, the problem is to identify the state
$\{ (U_1 \ket{+})^{\ot(N-n)} (U_0 \ket{+})^{\ot n} \}_n$,
and thus reduces to a quantum change point problem for pure states, which has been studied in
Ref.~\cite{Sen-Bag-Cal-Chi-2016}.
However, the use of a separable state results in performance degradation,
as shown in Fig.~\ref{fig:result_compare}.
\begin{figure}[bt]
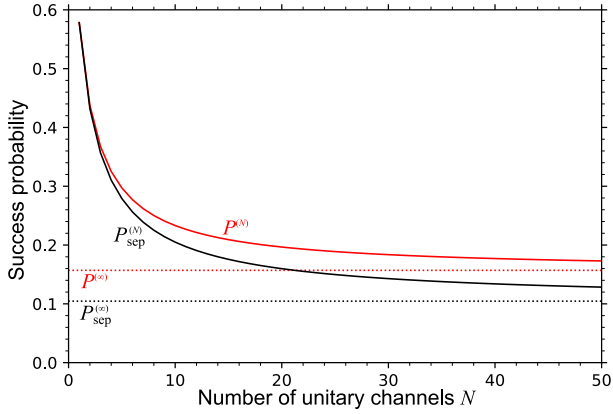

 \centering
 \InsertPDF{0.6}{result-compare.pdf}
 \caption{Probability of successful identification of change points
 for $\lambda_0 = 1$ and $\lambda_1 = \exp(\i \pi / 10)$.
 $P^{(N)}_\mathrm{sep}$ is the maximum success probability when the separable state $\ket{+}^{\ot N}$
 is used as an input, which is obtained by numerically solving a semidefinite programming problem.
 $P^{(\infty)} = \gamma$ and $P^{(\infty)}_\mathrm{sep}$ are limits as $N \to \infty$.
 The analytical solution of $P^{(\infty)}_\mathrm{sep}$ is given in Ref.~\cite{Sen-Bag-Cal-Chi-2016}.}
 \label{fig:result_compare}
\end{figure}

\emph{Identification of change points for Hamiltonians} ---
The above-mentioned discussion can be expanded to address the problem of identifying
Hamiltonian change points.
Consider a situation where a Hamiltonian $H_0(t)$ acting on a quantum system changes to $H_1(t)$
at a particular time, $t^\star$.
Assume that the change point $t^\star$ is known to be one of the possible candidates $t_0,\dots,t_N$
with equal prior probabilities.
We arbitrarily choose a natural number $R$ and time instants
$\tau_0 < \tau_1 < \dots < \tau_{NR}$ such that
$\tau_{kR} = t_k$ holds for each $k \in \{0,\dots,N\}$.
For each $b \in \{0,1\}$, $1 \le k \le N$, and $1 \le r \le R$,
let $\mU^{(k,r)}_b$ be the unitary channel representing
the time evolution with the Hamiltonian $H_b(t)$ between the time interval
$\tau_{(k-1)R + r - 1} \le t \le \tau_{(k-1)R + r}$, i.e.,
\begin{alignat}{1}
 \mU^{(k,r)}_b(\rho) &\coloneqq U^{(k,r)}_b \rho U^{(k,r)}_b{}^\dagger, \nonumber \\
 \quad U^{(k,r)}_b &\coloneqq \mT \left[ \exp
 \left[ - \i \int_{\tau_{(k-1)R + r - 1}}^{\tau_{(k-1)R + r}} H_b(t) dt \right] \right],
\end{alignat}
where $\mT$ is the time-ordered operator.
Also, let $\mE_n$ be the sequence expressed by
\begin{alignat}{1}
 \mE_n &\coloneqq [\mU^{(1,1)}_{n<1},\mU^{(1,2)}_{n<1},\dots,\mU^{(1,R)}_{n<1}, \nonumber \\
 &\qquad \mU^{(2,1)}_{n<2},\mU^{(2,2)}_{n<2},\dots,\mU^{(2,R)}_{n<2}, \dots, \nonumber \\
 &\qquad \mU^{(N,1)}_{n<N},\mU^{(N,2)}_{n<N},\dots,\mU^{(N,R)}_{n<N}].
\end{alignat}
The problem is then reduced to distinguishing quantum processes $\mE_0,\dots,\mE_N$
similar to the problem of unitary channels.
However, the main difference in this case is that the Hamiltonians can be time-dependent
and we can use arbitrarily short time intervals.
As a result, this problem is challenging to solve analytically.

Let $\mu_\mathrm{max}(t)$ and $\mu_\mathrm{min}(t)$, respectively, be
the maximum and minimum eigenvalues of $H_1(t) - H_0(t)$.
Also, let
\begin{alignat}{1}
 \gamma_k &\coloneqq \Delta \left[
 \int_{t_{k-1}}^{t_k} [\mu_\mathrm{max}(t) - \mu_\mathrm{min}(t)] dt \right],
 \quad 1 \le k \le N,
\end{alignat}
where $\Delta(\theta) \coloneqq \sin[\min(\theta,\pi)/2]$.

For $N = 1$, the problem is to identify two processes $\mE_0$ and $\mE_1$.
In this problem, the maximum success probability is known as $(\gamma_1 + 1) / 2$
\cite{Chi-Pre-Ren-1999,Aha-Mas-Pop-2002}, which is obtained as limit $R \to \infty$.
In addition, some experiments have been conducted using this result \cite{Sch-Gef-Lou-Ost-2021}.
We find that by extending the proof of Theorem~\ref{thm:main_eq},
an analytical expression of the ultimate performance for each $N$ is obtained,
as stated in the following theorem (the proof is given in \citesup{subsec:thm_hamiltonian}).
\begin{thmMain} \label{thm:main_hamiltonian}
 The maximum success probability in the change point problem for two Hamiltonians
 $H_0(t)$ and $H_1(t)$ with any integer $N \ge 1$ is given by
 \begin{alignat}{1}
  \frac{1}{N + 1} \left( \sum_{k=1}^N \gamma_k + 1 \right).
  \label{eq:PN_hamiltonian}
 \end{alignat}
\end{thmMain}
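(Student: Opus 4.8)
The plan is to follow the proof of Theorem~\ref{thm:main_eq} almost verbatim, adding two ingredients that handle the continuous-time, time-dependent nature of the problem. The first is a reduction: I would show that, in the limit $R\to\infty$ with all sub-intervals $\tau_j-\tau_{j-1}$ shrinking to zero, the $k$-th ``macro-interval'' $[t_{k-1},t_k]$ becomes equivalent, for the purpose of discrimination, to a \emph{single} pair of qubit unitary channels whose relative unitary has eigenvalues $e^{\mp\i\Theta_k/2}$ with $\Theta_k\coloneqq\min\!\big(\int_{t_{k-1}}^{t_k}[\mu_\mathrm{max}(t)-\mu_\mathrm{min}(t)]\,dt,\,\pi\big)$, so that the associated ``$\gamma$'' of Theorem~\ref{thm:main_eq} is exactly $\sin(\Theta_k/2)=\gamma_k$. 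Achievability of this reduction is a Trotter/fast-control argument: by interspersing the $R$ uses of $\mU^{(k,r)}_b$ with fixed control unitaries that track the (possibly rotating) eigenvectors of $H_1(t)-H_0(t)$ belonging to $\mu_\mathrm{max}(t)$ and $\mu_\mathrm{min}(t)$, one simulates up to vanishing error the evolution generated by $\mathrm{diag}(\mu_\mathrm{max}(t),\mu_\mathrm{min}(t))$ on a fixed two-dimensional space, and one stops accumulating relative phase once $\pi$ is reached. The matching impossibility is a speed-limit estimate: each infinitesimal slice can enlarge the attainable spread of relative eigenphases by at most $[\mu_\mathrm{max}(t)-\mu_\mathrm{min}(t)]\,dt$, regardless of the (arbitrary, possibly non-unitary) operations interleaved with the channel uses, and a spread beyond $\pi$ confers no further advantage.

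Granting this reduction, the Hamiltonian problem collapses to a change point problem for $N$ qubit unitary pairs carrying the \emph{heterogeneous} parameters $\gamma_1,\dots,\gamma_N$, and I would run both halves of the proof of Theorem~\ref{thm:main_eq} in this generality. For the lower bound I would feed in the entangled state obtained from Eq.~\eqref{eq:psi_main} by interpreting $\ket{\phi_{s_k}}$ as the $\ket{\pm}$-basis of the $k$-th effective qubit and by replacing the homogeneous amplitudes with inhomogeneous ones that reduce to those of Eq.~\eqref{eq:psi_main} when all $\gamma_k$ coincide, together with the natural orthonormal measurement; a direct computation should yield conditional probabilities $p_{m|n}$ of inhomogeneous discrete-Laplace type, with diagonal entries $p_{n|n}=(\gamma_n+\gamma_{n+1})/2$ under the convention $\gamma_0=\gamma_{N+1}\coloneqq1$ (so the boundary values are $(1+\gamma_1)/2$ and $(1+\gamma_N)/2$). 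Summing and dividing by $N+1$ telescopes to exactly \eqref{eq:PN_hamiltonian}. For the upper bound I would write the Lagrange dual of the $NR$-step semidefinite program as in the proof of Theorem~\ref{thm:main_eq} and exhibit a dual-feasible point of value \eqref{eq:PN_hamiltonian}; the only structural change is that the single scalar $\gamma$ is replaced block by block by $\gamma_k$, and it is precisely in verifying dual feasibility that the per-slice speed-limit bound of the previous paragraph is invoked to control the relevant operator norms uniformly in $R$. Weak duality then gives that the maximum success probability is at most the dual value $D$, which is at most \eqref{eq:PN_hamiltonian}, while the construction attains \eqref{eq:PN_hamiltonian} in the limit $R\to\infty$; hence the supremum over $R$ equals \eqref{eq:PN_hamiltonian}.

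The step I expect to be the main obstacle is the converse combined with the continuum limit: one must rule out that \emph{any} refinement $R$, \emph{any} interleaved control (not assumed unitary), or \emph{any} entanglement with ancillas beats $\frac{1}{N+1}\big(\sum_k\gamma_k+1\big)$, and this demands a clean, dimension-independent dual certificate for the discretized problem whose value is controlled by the $\gamma_k$ uniformly in $R$. Formalizing the intuition that ``one time slice adds at most $[\mu_\mathrm{max}-\mu_\mathrm{min}]\,dt$ of distinguishing power'' and that this survives arbitrary interleaving, and folding the $\min(\cdot,\pi)$ clipping (the regime in which a single macro-interval already separates $H_0$ from $H_1$ perfectly) into the certificate without degrading its value, is the real technical work; the remainder is bookkeeping that parallels the unitary case.
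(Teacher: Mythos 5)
Your overall architecture coincides with the paper's: discretize into an $NR$-step comb-discrimination SDP, prove a heterogeneous-$\gamma_k$ version of Theorem~\ref{thm:main_eq} (this is Theorem~\ref{thm:main} in the supplement), achieve the value with an entangled input plus interleaved eigenvector-tracking controls giving $p_{n|n}=(\gamma_n+\gamma_{n+1})/2$ under the convention $\gamma_0=\gamma_{N+1}=1$, certify optimality by a dual-feasible point, and pass to $R\to\infty$. The one place where your plan is materially vaguer than the paper is exactly the step you flag as the main obstacle, and it is worth seeing how the paper dissolves it. You propose to prove a per-slice ``speed limit'' and then use it to control operator norms in some unspecified dual certificate; the paper instead never proves any new continuum-limit statement. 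It takes, for each macro-interval $k$, an \emph{optimal dual solution} $Y^{(k)}$ of the binary problem~\eqref{prob:DY} of discriminating $\mU^{(k)}_0$ from $\mU^{(k)}_1$, whose optimal value $y_k=(\gamma_k+1)/2$ in the limit $R\to\infty$ is the already-known two-Hamiltonian discrimination result \cite{Chi-Pre-Ren-1999,Aha-Mas-Pop-2002}, and assembles the global certificate by the explicit recursion~\eqref{eq:Xn}, $X^{(n)}=\frac{1}{n+1}[\ChoiU^{(n)}_1\ot nX^{(n-1)}+(2Y^{(n)}-\ChoiU^{(n)}_1)\ot\ChoiE^{(n-1)}_{n-1}]$. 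Feasibility (Lemma~\ref{lemma:XN}) needs only $Y^{(n)}\ge\ChoiU^{(n)}_b/2$ and positivity of tensor products, and the value telescopes to~\eqref{eq:qN}, uniformly in $R$, with no operator-norm estimates. So the ``clean, dimension-independent dual certificate uniform in $R$'' you call for exists, but it is obtained by treating the binary dual optima as black boxes rather than by re-deriving a slice-by-slice bound; the $\min(\cdot,\pi)$ clipping and the arbitrary interleaved controls are all absorbed into the cited binary result.

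Two smaller cautions. First, your claimed \emph{equivalence} of a macro-interval to a single effective qubit pair is stronger than what is needed or proved: achievability only requires the restriction of the dynamics to the span of the extremal eigenvectors, and the converse only requires the binary dual optimum, not a full process equivalence valid inside the larger multi-hypothesis problem. Second, your per-slice impossibility heuristic, as stated for ``arbitrary, possibly non-unitary operations interleaved with the channel uses,'' is essentially a restatement of the binary result of \cite{Aha-Mas-Pop-2002}; if you intend to reprove it rather than cite it, that is a separate (nontrivial) project orthogonal to the change-point structure. With the recursion~\eqref{eq:Xn} in place of your speed-limit argument, your proposal becomes the paper's proof.
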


In the limit of large $N$, it follows from Eq.~\eqref{eq:PN_hamiltonian} that
the maximum success probability tends to the average of $\gamma_1,\gamma_2,\dots,\gamma_N$
(i.e., $\sum_{k=1}^N \gamma_k / N$).

\emph{Conclusions} ---
In this paper, we investigated the difficulty of identifying a precise moment
when the Hamiltonian suddenly changes, and we presented an analytical expression of
the maximum success probability.
We first discussed the quantum change point problem for unitary channels, as a simpler problem.
The objective of this task is to accurately identify the exact moment
when a unitary channel changes to another.
We demonstrated that the maximum success probability can be expressed in a simple analytical form
by using only the number of possible change points and a parameter reflecting
the ease of recognizing the channels before and after the change,
assuming identical prior probabilities.
The proposed method was then applied to derive the optimal performance for the problem of
discriminating change points for Hamiltonians.

This work lays the foundation for future research on related topics,
including the estimation of a continuous-valued change point and the detection of multiple change points.
In addition, it can facilitate research on the change point problem for channels in open systems
(i.e., non-unitary channels) and the optimization with other criteria
such as unambiguous or Neyman-Pearson.
We anticipate that our results will provide a solid starting point for addressing these challenges.

We thank for O.~Hirota, M.~Sohma, T.~S.~Usuda, and K.~Kato for insightful discussions.
This work was supported by the Air Force Office of Scientific Research under
award number FA2386-22-1-4056.


\clearpage\onecolumngrid

\Letter{
\pagestyle{empty}
\setcounter{equation}{0}
\renewcommand{\theequation}{S\arabic{equation}}
\setcounter{section}{0}
\renewcommand{\citenumfont}[1]{S#1}
\renewcommand\bibnumfmt[1]{[S#1]}
\setcounter{thm}{0}
\renewcommand{\thethm}{S\arabic{thm}}
\setcounter{figure}{0}
\renewcommand{\thefigure}{S\arabic{figure}}

\begin{center}
 \supplementaltitle%
 \supplementalauthor{Kenji Nakahira}%
 \supplementalaffiliation{%
 Quantum Information Science Research Center, Quantum ICT Research Institute, Tamagawa University,
 Machida, Tokyo 194-8610, Japan
 }%
\end{center}
}{
\appendix
}

\section{Notation}

Let us set up some notation.
Let $\Real$, $\Realp$, and $\Complex$ be, respectively, the sets of all real numbers,
all nonnegative real numbers, and all complex numbers.
For a unitary matrix $U$, let $\Ad_U$ be the unitary channel determined by
$\Ad_U(\rho) = U \rho U^\dagger$.
We call a completely positive map a single-step process.
$\Pos(\V,\W)$ and $\Chn(\V,\W)$, respectively, denote the sets of all single-step processes
and quantum channels (i.e., completely positive trace-preserving maps) from a system $\V$ to a system $\W$.
Also, let $\Pos_\V$ and $\Den_\V$ be, respectively, the sets
of all positive semidefinite matrices and states (i.e., density matrices) of a system $\V$.
Let $\Cone(\mX)$ be the convex cone spanned by a set $\mX$.
$\V \cong \V'$ denotes that the dimensions of the systems $\V$ and $\V'$ are the same.
Let $\I_\V$ be the identity matrix on $\V$ and we write $\ident_\V \coloneqq \Ad_{\I_V}$.
$X \ge Y$ denotes that $X - Y$ is positive semidefinite.

A single-step process $f \in \Pos(\V,\W)$ is depicted by
\begin{alignat}{1}
 \InsertPDF{1.0}{single_step_process.pdf} ~\raisebox{.1em}{,}
 \label{eq:single_step_process_pdf}
\end{alignat}
where systems are depicted by labeled wires.
Wires representing one-dimensional systems will often be omitted.
Single-step processes can be composed sequentially or in parallel.
Let us consider the concatenation of $T$ single-step processes,
which we call a $T$-step process,
$\{ \mE^{(t)} \in \Pos(\W'_{t-1} \ot \V_t, \W'_t \ot \W_t) \}_{t=1}^T$
(where $W'_0$ and $W'_T$ are one-dimensional) such as
\begin{alignat}{1}
 \InsertPDF{1.0}{comb.pdf} ~\raisebox{.1em}{.}
 \label{eq:comb_pdf}
\end{alignat}
We write this $T$-step process by $\mE^{(T)} \ast \cdots \ast \mE^{(1)}$,
where $\ast$ denotes the concatenation.
A $T$-step process $\mE^{(T)} \ast \cdots \ast \mE^{(1)}$ is called
a quantum comb \cite{Chi-Dar-Per-2008} if $\mE^{(1)},\dots,\mE^{(T)}$ are quantum channels.
In particular, this manuscript often focuses on a $T$-step process
generated by $T$ unitary channels, which is depicted by
\begin{alignat}{1}
 \InsertPDF{1.0}{comb_unitary.pdf} ~\raisebox{.1em}{,}
 \label{eq:comb_unitary_pdf}
\end{alignat}
where $\{ \mU^{(t)} \in \Chn(\V_t, \W_t) \}_{t=1}^T$ are unitary channels.
Given a $T$-step process written by Eq.~\eqref{eq:comb_unitary_pdf},
we consider a collection of $(T+1)$-step processes,
denoted by $\{ \mD_m \coloneqq \Pi_m \ast \mD^{(T)} \ast \cdots \ast \mD^{(1)} \}_{m=0}^M$,
where each $\mD_m$ is depicted by
\begin{alignat}{1}
 \InsertPDF{1.0}{comb2.pdf} ~\raisebox{1em}{.}
 \label{eq:comb2_pdf}
\end{alignat}
$\{ \mD_m \}_{m=0}^M$ is called a tester if $\mD^{(1)},\dots,\mD^{(T)}$ are quantum channels
[which implies that $\mD^{(1)}$ is a state] and $\{ \Pi_m \}_{m=0}^M$ is a quantum measurement.
We will depict tester elements in blue.

The Choi-Jamio{\l}kowski representations \cite{Cho-1975,Jam-1972,Chi-Dar-Per-2008} of
processes will be denoted by the same letter in the Fraktur font.
The Choi-Jamio{\l}kowski representation, $\ChoiE$, of the $T$-step process $\mE$
written by Eq.~\eqref{eq:comb_pdf} is the state of the system
$\W_T \ot \V_T \ot \cdots \ot \W_1 \ot \V_1$
that is depicted by
\begin{alignat}{1}
 \InsertPDF{1.0}{comb_choi.pdf} ~\raisebox{1em}{,}
 \label{eq:comb_choi}
\end{alignat}
where $\Psi_t \coloneqq \kket{\I_{\V_t}} \bbra{\I_{\V_t}}$
and $\kket{\I_{\V_t}} \coloneqq \sum_n \ket{n}\ket{n} \in \V_t \ot \V_t$.
If a $T$-step process $\mE$ can be written in the form of Eq.~\eqref{eq:comb_unitary_pdf},
then its Choi-Jamio{\l}kowski representation is $\ChoiE = \bigotimes_{t=0}^{T-1} \ChoiU^{(T-t)}$,
where $\ChoiU^{(t)}$ is the Choi-Jamio{\l}kowski representation of $\mU^{(t)}$,
i.e., $\ChoiU^{(t)} \coloneqq [\mU^{(t)} \ot \ident_{\V_t}] \Psi_t \in \Pos_{\W_t \ot \V_t}$.
Let $\Comb_{\W_T,\V_T;\dots;\W_1,\V_1}$ be the set of
all $\tau \in \Pos_{\W_T \ot \V_T \ot \cdots \ot \W_1 \ot \V_1}$
for which there exists a collection
$\{ \tau^{(t)} \in \Pos_{\W_t \ot \V_t \ot \cdots \ot \W_1 \ot \V_1} \}_{t=1}^T$
satisfying
\begin{alignat}{1}
 \tau^{(T)} &= \tau, \nonumber \\
 \Trp{\W_t} \tau^{(t)} &= \I_{\V_t} \ot \tau^{(t-1)}, \quad \forall t \in \{ 2,\dots,T \}, \nonumber \\
 \Trp{\W_1} \tau^{(1)} &= \I_{\V_1}.
 \label{eq:Comb}
\end{alignat}
A $T$-step process $\mE$ written in the form of Eq.~\eqref{eq:comb_pdf} is a quantum comb
if and only if its Choi-Jamio{\l}kowski representation $\ChoiE$ is in $\Comb_{\W_T,\V_T;\dots;\W_1,\V_1}$.
Let $\Tester^{(M)}_{\W_T,\V_T;\dots;\W_1,\V_1}$ be the set of
all collections of $M+1$ positive semidefinite matrices
$\{ \tau'_m \in \Pos_{\W_T \ot \V_T \ot \cdots \ot \W_1 \ot \V_1} \}_{m=0}^M$
satisfying $\sum_{m=0}^M \tau'_m \in \Comb_{\Complex,\W_T;\V_T,\dots,\W_1;\V_1,\Complex}$.
It follows that $\{ \tau'_m \}_{m=0}^M$ is in $\Tester^{(M)}_{\W_T,\V_T;\dots;\W_1,\V_1}$
if and only if there exists a collection
$\{ \tau^{(t)} \in \Pos_{\V_t \ot \W_{t-1} \ot \V_{t-1} \ot \cdots \ot \V_1} \}_{t=1}^T$
satisfying
\begin{alignat}{1}
 \I_{\W_T} \ot \tau^{(T)} &= \sum_{m=0}^M \tau'_m, \nonumber \\
 \Trp{\V_t} \tau^{(t)} &= \I_{\W_{t-1}} \ot \tau^{(t-1)}, \quad \forall t \in \{2,\dots,T\}, \nonumber \\
 \Tr \tau^{(1)} &= 1.
 \label{eq:Phi_sum}
\end{alignat}
Also, a collection of $(T+1)$-step processes $\{ \mD_m \}_{m=0}^M$
written in the form of Eq.~\eqref{eq:comb2_pdf} is a tester if and only if
$\{ \ChoiD_m \}_{m=0}^M \in \Tester^{(M)}_{\W_T,\V_T;\dots;\W_1,\V_1}$ holds.
For a quantum comb $\mE$ with $\ChoiE \in \Comb_{\W_T,\V_T;\dots;\W_1,\V_1}$ and
a tester $\{ \mD_m \}_{m=0}^M$
with $\{ \ChoiD_m \}_{m=0}^M \in \Tester^{(M)}_{\W_T,\V_T;\dots;\W_1,\V_1}$,
the probability that the tester $\{ \mD_m \}_{m=0}^M$ performed on $\mE$ gives the outcome $m$
is diagrammatically depicted by
\begin{alignat}{1}
 \InsertPDF{1.0}{comb_tester.pdf} ~\raisebox{1em}{,}
 \label{eq:comb_tester_pdf}
\end{alignat}
which is equal to $\mD_m \ast \mE = \Tr(\ChoiD_m^\T \ChoiE) \in \Realp$ (where $^\T$ is the transpose).
We can easily verify $\sum_{m=0}^M \Tr(\ChoiD_m^\T \ChoiE) = 1$.

\section{Problem Formulation} \label{sec:problem}

For each $k \in \{0,\dots,N\}$ and $b \in \{0,1\}$, let
\begin{alignat}{1}
 \mU^{(k)}_b &\coloneqq \mU^{(k,R)}_b \ast \mU^{(k,R-1)}_b \ast \cdots \ast \mU^{(k,1)}_b
 \label{eq:Ukb}
\end{alignat}
be an $R$-step process that consists of $R$ unitary channels
$\mU^{(k,R)}_b,\dots,\mU^{(k,1)}_b$.
The input and output systems of $\mU^{(k,r)}_b$ are, respectively,
denoted by $\V_{k,r}$ and $\W_{k,r}$.
$\mU^{(k)}_b$ is diagrammatically represented as
\begin{alignat}{1}
 \InsertPDF{1.0}{U.pdf}  ~\raisebox{.5em}{.}
 \label{eq:U_pdf}
\end{alignat}
Let us concentrate on the case in which
$\V_{k,r} \cong \W_{k,r} \cong \V_{1,1}$ holds for any $k$ and $r$.
We choose a unitary matrix $U^{(k,r)}_b$ such that $\mU^{(k,r)}_b = \Ad_{U^{(k,r)}_b}$.
The Choi-Jamio{\l}kowski representation, $\ChoiU^{(k)}_b$, of $\mU^{(k)}_b$ is written by
\begin{alignat}{1}
 \ChoiU^{(k)}_b &= \bigotimes_{r=0}^{R-1} \ChoiU^{(k,R-r)}_b
 = \ChoiU^{(k,R)}_b \ot \ChoiU^{(k,R-1)}_b \ot \cdots \ot \ChoiU^{(k,1)}_b,
\end{alignat}
where $\ChoiU^{(k,r)}_b \coloneqq [\mU^{(k,r)}_b \ot \ident_{\V_{k,r}}] (\kket{\I_{\V_{k,r}}}
\bbra{\I_{\V_{k,r}}})$ is the Choi-Jamio{\l}kowski representation of $\mU^{(k,r)}_b$.
Let us consider the following $NR$-step process:
\begin{alignat}{1}
 \mE^{(N)}_n &\coloneqq \mU^{(N)}_\bin{n < N} \ast \mU^{(N-1)}_\bin{n < N-1} \ast \cdots \ast \mU^{(1)}_\bin{n < 1},
 \quad n \in \{0,\dots,N\},
 \label{eq:E}
\end{alignat}
where $\bin{n < k}$ is $1$ if $n < k$, and $0$ otherwise.
$\mE^{(N)}_n$ is the process such that
$n$ processes $\mU^{(1)}_0,\mU^{(2)}_0,\dots,\mU^{(n)}_0$ are applied
and then $N-n$ processes $\mU^{(n+1)}_1,\dots,\mU^{(N)}_1$ are applied.
The problem of discriminating quantum processes $\mE^{(N)}_0,\dots,\mE^{(N)}_N$
can be seen as a generalization of the problem presented in the main paper.
We find that the change point problem for unitary channels described in the main paper can be viewed
as a special case of this problem with $R = 1$ and
$\mU^{(1)}_b = \cdots = \mU^{(N)}_b$ $~(\forall b \in \{0,1\})$.
Also, this problem with $R \to \infty$ corresponds to the change point problem for Hamiltonians.
The Choi-Jamio{\l}kowski representation, $\ChoiE^{(N)}_n$, of the process $\mE^{(N)}_n$ is
given by
\begin{alignat}{1}
 \ChoiE^{(N)}_n &= \bigotimes_{k=0}^{N-1} \ChoiU^{(N-k)}_\bin{n < N-k}
 = \ChoiU^{(N)}_\bin{n < N} \ot \ChoiU^{(N-1)}_\bin{n < N-1} \ot \cdots
 \ot \ChoiU^{(1)}_\bin{n < 1}.
 \label{eq:tmEN}
\end{alignat}
For example, in the case of $N = 2$, $\ChoiE^{(2)}_0 = \ChoiU^{(2)}_1 \ot \ChoiU^{(2)}_1$,
$\ChoiE^{(2)}_1 = \ChoiU^{(2)}_1 \ot \ChoiU^{(1)}_0$,
and $\ChoiE^{(2)}_2 = \ChoiU^{(2)}_0 \ot \ChoiU^{(1)}_0$ hold.

The problem of discriminating quantum combs $\mE_0,\dots,\mE_N$
with equal prior probabilities can be formulated as the following semidefinite programming problem:
\begin{alignat}{1}
 \begin{array}{ll}
  \mbox{maximize} & \displaystyle \frac{1}{N+1}
   \sum_{n=0}^N \Tr[\ChoiD_n^\T \ChoiE^{(N)}_n] \\
  \mbox{subject~to} & \displaystyle \left\{ \ChoiD_m \right\}_{m=0}^N \in
   \Tester^{(N)}_{\W_{N,R},\V_{N,R};\dots;\W_{N,1},\V_{N,1};\W_{N-1,R},\V_{N-1,R};\dots;\W_{1,1},\V_{1,1}}.
 \end{array}
 \tag{$\mathrm{P}$} \label{prob:P}
\end{alignat}
Let $P^{(N)}$ denote the optimal value of this problem.
The Lagrange dual problem is given by \cite{Chi-2012}
\begin{alignat}{1}
 \begin{array}{ll}
  \mbox{minimize} & \eta(X) \\
  \mbox{subject~to} & \displaystyle X \in \mC^{(N)},
   ~X \ge \frac{\ChoiE^{(N)}_m}{N+1} ~(\forall m \in \{0,\dots,N\}), \\
 \end{array}
 \tag{$\mathrm{D}$} \label{prob:D}
\end{alignat}
where
\begin{alignat}{1}
 \mC^{(N)} &\coloneqq
 \Cone(\Comb_{\W_{N,R},\V_{N,R};\dots;\W_{N,1},\V_{N,1};\W_{N-1,R},\V_{N-1,R};\dots;\W_{1,1},\V_{1,1}})
 \nonumber \\
 &= \left\{ t \chi : t \in \Realp,
 ~\chi \in \Comb_{\W_{N,R},\V_{N,R};\dots;\W_{N,1},\V_{N,1};\W_{N-1,R},\V_{N-1,R};\dots;\W_{1,1},\V_{1,1}} \right\}
\end{alignat}
and $\eta$ is the function defined as
\begin{alignat}{1}
 \eta(t \chi) = t, \quad \forall
 t \in \Realp, ~\chi \in \Comb_{\W_{N,R},\V_{N,R};\dots;\W_{N,1},\V_{N,1};\W_{N-1,R},\V_{N-1,R};\dots;\W_{1,1},\V_{1,1}}.
\end{alignat}
Note that, for each nonzero $X \in \mC^{(N)}$, $t \in \Realp$ and
$\chi \in \Comb_{\W_{N,R},\V_{N,R};\dots;\W_{N,1},\V_{N,1};\W_{N-1,R},\V_{N-1,R};\dots;\W_{1,1},\V_{1,1}}$
satisfying $X = t \chi$ are uniquely determined by $t = \eta(X)$ and $\chi = X / t$.
The optimal value of the dual problem coincides with the optimal value, $P^{(N)}$, of the primal problem
\cite{Chi-2012}.

\section{Maximum success probability} \label{sec:thm}

Let
\begin{alignat}{1}
 \tU^{(k,r)}_b &\coloneqq U^{(k,r)}_0{}^\dagger U^{(k,r)}_b =
 \begin{dcases}
  \I_{\V_{k,r}}, & b = 0, \\
  U^{(k,r)}_0{}^\dagger U^{(k,r)}_1, & b = 1.
 \end{dcases}
\end{alignat}
$\Gamma^{(k,r)}$ denotes the polygon in the complex plane whose vertices are
the eigenvalues of $\tU^{(k,r)}_1$.
Let $\lambda^{(k,r)}_0$ and $\lambda^{(k,r)}_1$ be the ends of a side of $\Gamma^{(k,r)}$
that is closest to the origin (see Fig.~\ref{fig:eigenvalues}).
Note that such a side may not be unique.
By swapping $\lambda^{(k,r)}_0$ and $\lambda^{(k,r)}_1$ if necessary,
we assume $0 \le \arg[\lambda^{(k,r)}_1 / \lambda^{(k,r)}_0] \le \pi$.
Let $\ket{\lambda^{(k,r)}_0}$ and $\ket{\lambda^{(k,r)}_1}$ be, respectively,
the normalized eigenvectors of $\tU^{(k,r)}_1$ corresponding to
the eigenvalues $\lambda^{(k,r)}_0$ and $\lambda^{(k,r)}_1$.
\begin{figure}[bt]
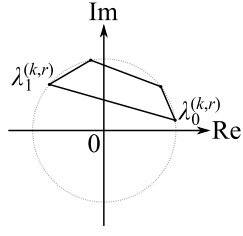

 \centering
 \InsertPDF{1.0}{eigenvalues.pdf}
 \caption{Example of two vertices, $\lambda^{(k,r)}_0$ and $\lambda^{(k,r)}_1$, of $\Gamma^{(k,r)}$,
 in which $\tU^{(k,r)}_1$ has four eigenvalues.
 The square with solid line, which is inscribed in the unit circle, represents $\Gamma^{(k,r)}$.}
 \label{fig:eigenvalues}
\end{figure}

For each $k \in \{1,\dots,N\}$, let $\gamma_k$ be $1$ if at least one of the polygons
$\Gamma^{(k,1)},\dots,\Gamma^{(k,R)}$
includes the origin or $\sum_{r=1}^R \arg[\lambda^{(k,r)}_1 / \lambda^{(k,r)}_0] \ge \pi$ holds,
and
\begin{alignat}{1}
 \gamma_k &\coloneqq
 \frac{1}{2} \left| \frac{\lambda^{(k)}_1}{\lambda^{(k)}_0} - 1 \right|,
 \quad \lambda^{(k)}_b \coloneqq \prod_{r=1}^R \lambda^{(k,r)}_b
\end{alignat}
otherwise.
Also, let $\gamma_0 \coloneqq 1$ and $\gamma_{N+1} \coloneqq 1$ for convenience.
In the case of $N = 1$, Problem~\eqref{prob:P} reduces to the problem of discriminating
$\mU^{(1)}_0$ and $\mU^{(1)}_1$ with equal prior probabilities in a single shot,
whose optimal value is known to be $(\gamma_1 + 1) / 2$
\cite{Chi-Pre-Ren-1999,Aha-Mas-Pop-2002,Dar-Pre-Par-2001}.
It is easily seen that for each $k$, the maximum success probability for discriminating
$\mU^{(k)}_0$ and $\mU^{(k)}_1$ with equal prior probabilities in a single shot
is $(\gamma_k + 1) / 2$,
and thus $\gamma_k = 1$ holds if and only if $\mU^{(k)}_0$ and $\mU^{(k)}_1$ are
perfectly distinguishable.

Our main theorem is the following:
\begin{thm} \label{thm:main}
 For any integer $N \ge 1$, we have
 \begin{alignat}{1}
  P^{(N)} &= \frac{1}{N + 1} \left( \sum_{k=1}^N \gamma_k + 1 \right).
 \end{alignat}
\end{thm}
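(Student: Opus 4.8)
The plan is to mirror the proof of Theorem~\ref{thm:main_eq}: produce one explicit discrimination strategy whose success probability equals $q\coloneqq\frac{1}{N+1}\bigl(\sum_{k=1}^{N}\gamma_k+1\bigr)$, and one feasible point $X$ of the dual problem~\eqref{prob:D} with objective $\eta(X)=q$; then the chain $q\le P^{(N)}\le D\le\eta(X)=q$ (the first inequality from the strategy, the second by weak duality, the third because $D$ is the dual optimum) forces equality throughout. Two reductions come first. \emph{(i)} For each slot $k$ one reduces its $R$ substeps to an effective qubit: inserting fixed feed-forward unitaries between the substeps confines the probe within slot $k$ to a two-dimensional subspace on which the ``before'' and ``after'' evolutions differ only by a relative qubit phase $\theta_k$ with $\sin(\theta_k/2)=\gamma_k$; this is exactly the content of the optimal single-shot discrimination of $\mU^{(k)}_0$ versus $\mU^{(k)}_1$ with value $(\gamma_k+1)/2$ recalled above. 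Write $\ket{\phi^{(k)}_{+1}},\ket{\phi^{(k)}_{-1}}$ for an orthonormal basis of that subspace, chosen so that the relative unitary acts there as $\mathrm{diag}(1,e^{\i\theta_k})$. \emph{(ii)} If $\gamma_k=1$ for some $k$, then slot $k$ is perfectly distinguishable, the problem decouples into two independent change-point problems over $\{0,\dots,k-1\}$ and $\{k,\dots,N\}$, and the formula follows by induction on $N$; hence we may assume $0<\gamma_k<1$ for every $k$, retaining the convention $\gamma_0=\gamma_{N+1}=1$.

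For the lower bound $q\le P^{(N)}$ I would use, as the probe, the natural generalization of the state in Eq.~\eqref{eq:psi_main},
\begin{alignat}{1}
 \ket{\psi}=\sum_{s_1,\dots,s_N\in\{-1,1\}}a_{s_1,\dots,s_N}\,\ket{\phi^{(N)}_{s_N}}\cdots\ket{\phi^{(1)}_{s_1}},
\end{alignat}
where the amplitudes $a_{s_1,\dots,s_N}$ have a matrix-product form built from two-site weights depending on the gap parameters $\zeta_k\coloneqq(1-\gamma_k)/(1+\gamma_k)$ (and vanishing on strings with an odd number of $-1$'s), suitably normalized. Running the $N$ embedded-qubit slots on $\ket{\psi}$ and measuring in an adapted rank-one orthonormal system $\{\ket{\pi_m}\}_{m=0}^{N}$ gives a conditional distribution $p_{m|n}=|\braket{\pi_m|\cdots|\psi}|^2$ of the same ``folded discrete-Laplace'' shape as Eq.~\eqref{eq:pnm2} but with position-dependent rates; the point of the design is that the diagonal probabilities come out to $p_{n|n}=(\gamma_n+\gamma_{n+1})/2$, whence
\begin{alignat}{1}
 \frac{1}{N+1}\sum_{n=0}^{N}p_{n|n}=\frac{1}{N+1}\sum_{n=0}^{N}\frac{\gamma_n+\gamma_{n+1}}{2}=\frac{1}{N+1}\Bigl(\sum_{k=1}^{N}\gamma_k+1\Bigr)=q,
\end{alignat}
using $\gamma_0=\gamma_{N+1}=1$. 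Verifying that the amplitudes can be normalized and that such a measurement basis exists is a bookkeeping computation parallel to the equal-$\gamma$ case of Theorem~\ref{thm:main_eq}.

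For the upper bound I would construct an explicit $X\in\mC^{(N)}$ satisfying $X\ge\ChoiE^{(N)}_m/(N+1)$ for every $m\in\{0,\dots,N\}$ and $\eta(X)=q$, which gives $P^{(N)}\le D\le\eta(X)=q$. The crucial simplification is that each $\ChoiE^{(N)}_n=\bigotimes_k\ChoiU^{(k)}_{\bin{n<k}}$ is a positive multiple of a rank-one operator, so both the comb membership of $X$---the chain of partial-trace identities~\eqref{eq:Comb} on suitable intermediate operators $\tau^{(t)}$---and the dominations $X-\ChoiE^{(N)}_m/(N+1)\ge0$ can be imposed in closed form, building $X$ slot by slot and reading off $\eta(X)=q$ through complementary slackness with the strategy above. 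I expect this dual construction to be the main obstacle: unlike in ordinary minimum-error state discrimination, $X$ must \emph{simultaneously} lie in the nontrivial cone $\mC^{(N)}$ of comb Choi operators, dominate all $N+1$ rescaled rank-one operators $\ChoiE^{(N)}_m/(N+1)$, and carry exactly the normalization $q$; reconciling these three requirements is what pins down the discrete-Laplace form of the optimum, and checking the operator inequalities on the full Hilbert space---not merely on the embedded qubits---is the delicate step. Finally, Theorem~\ref{thm:main_hamiltonian} follows by refining the time mesh ($R\to\infty$): a continuity argument yields $\gamma_k\to\Delta\bigl[\int_{t_{k-1}}^{t_k}(\mu_{\mathrm{max}}(t)-\mu_{\mathrm{min}}(t))\,dt\bigr]$, and the expression for $P^{(N)}$ passes to the limit.
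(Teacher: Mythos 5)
Your overall architecture is exactly the paper's: a primal/dual sandwich $q \le P^{(N)} \le \eta(X) = q$, with the same two reductions (fixed feed-forward unitaries collapsing the $R$ substeps of each slot onto the two extremal eigenvectors of $\tU^{(k,r)}_1$, and splitting the problem at any perfectly distinguishable slot), and your primal construction --- a matrix-product-form input state with two-site weights in the gap parameters, supported on even-parity sign strings, measured in an adapted orthonormal basis so that $p_{n|n} = (\gamma_n + \gamma_{n+1})/2$ --- is the paper's construction (the actual weights are $(s_k s_{k+1}\sqrt{\nu_k}+1)/\sqrt{\nu_k+1}$ with $\nu_k = (1-\sqrt{\zeta_k\zeta_{k+1}})/(1+\sqrt{\zeta_k\zeta_{k+1}})$, and the existence of the measurement is established through the Gram matrix of the output states). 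That half is acceptable as a sketch, modulo the bookkeeping you explicitly defer.

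The genuine gap is the dual half, which you yourself flag as ``the main obstacle'' and then do not resolve. Asserting that $X$ can be built ``slot by slot'' because each $\ChoiE^{(N)}_m$ is rank one is not an argument: rank-one-ness plays no role in the actual construction, and complementary slackness is not what fixes $\eta(X)$. The missing idea is the explicit recursion
\begin{alignat}{1}
 X^{(1)} = Y^{(1)}, \qquad
 X^{(n)} = \frac{1}{n+1}\left[ \ChoiU^{(n)}_1 \ot n X^{(n-1)} + \left( 2 Y^{(n)} - \ChoiU^{(n)}_1 \right) \ot \ChoiE^{(n-1)}_{n-1} \right], \quad n \ge 2,
\end{alignat}
where $Y^{(n)}$ is an optimal dual solution of the single-slot problem of discriminating $\mU^{(n)}_0$ from $\mU^{(n)}_1$, so that $\eta[Y^{(n)}] = (\gamma_n+1)/2$ and $Y^{(n)} \ge \ChoiU^{(n)}_b/2$ for both $b$. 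With this ansatz everything you list as delicate becomes a short induction on $n$: comb membership holds because $Z \ot Z' \in \mC^{(n)}$ for $Z \in \tmC^{(n)}$, $Z' \in \mC^{(n-1)}$ and $2Y^{(n)} - \ChoiU^{(n)}_1 \ge 0$; the domination $X^{(n)} \ge \ChoiE^{(n)}_m/(n+1)$ follows for $m<n$ by dropping the second term and invoking the inductive hypothesis on $X^{(n-1)}$ together with $\ChoiU^{(n)}_1 \ot \ChoiE^{(n-1)}_m = \ChoiE^{(n)}_m$, and for $m=n$ by dropping the first term and using $Y^{(n)} \ge \ChoiU^{(n)}_0/2$; and setting $q_n \coloneqq \eta[X^{(n)}]$, $y_n \coloneqq \eta[Y^{(n)}]$, multiplicativity of $\eta$ on tensor products gives $q_n = (n q_{n-1} + 2y_n - 1)/(n+1)$ and hence $q_N = q$ directly, with no appeal to the primal strategy. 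Without this (or an equivalent) explicit feasible point, the upper bound --- and hence the theorem --- is unproved.
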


Theorem~\ref{thm:main_eq} in the main paper is a special case of this theorem
with $\gamma_1 = \cdots = \gamma_N = \gamma$.
In order to prove Theorem~\ref{thm:main},
we first show that there exists a tester such that
the success probability is $P^\star \coloneqq (\sum_{k=1}^N \gamma_k + 1) / (N + 1)$
in Sec.~\ref{subsec:thm_primal}.
This immediately implies $P^{(N)} \ge P^\star$.
In Sec.~\ref{subsec:thm_dual}, we next show that there exists a feasible solution, $X$,
to Problem~\eqref{prob:D} satisfying $\eta(X) = P^\star$,
which implies $P^{(N)} \le P^\star$ and thus $P^{(N)} = P^\star$.
In Sec.~\ref{subsec:thm_hamiltonian}, we prove Theorem~\ref{thm:main_hamiltonian} in the main paper
by applying Theorem~\ref{thm:main} to the problem of discriminating change points for Hamiltonians.

\subsection{Feasible solution to the primal problem} \label{subsec:thm_primal}

We assume that for each $k \in \{1,\dots,N\}$,
$\mU^{(k)}_0$ and $\mU^{(k)}_1$ are not perfectly distinguishable,
i.e., $\gamma_k < 1$ holds, unless otherwise stated.
Let $\mS$ be the set of all testers $\{ \mD_m \}_{m=0}^N$ such that
each $\mD_m \ast \mE_n$ is depicted by
\begin{alignat}{1}
 \InsertPDF{1.0}{process_discrimination_primal.pdf} ~\raisebox{1em}{,}
 \label{eq:process_discrimination_primal}
\end{alignat}
where
\begin{alignat}{1}
 \mQ^{(k,r)} &\coloneqq
 \begin{dcases}
  \Ad_{U^{(k,r)}_0{}^\dagger}, & r = R, \\
  \Ad_{Q^{(k,r)} U^{(k,r)}_0{}^\dagger}, & r < R \\
 \end{dcases}
\end{alignat}
and $Q^{(k,r)}$ is a unitary channel satisfying
$Q^{(k,r)} \ket{\lambda^{(k,r)}_b} = \ket{\lambda^{(k,r+1)}_b}$ $~(\forall b \in \{0,1\})$.
In such a tester $\{ \mD_m \}_{m=0}^N \in \mS$,
we first prepare the composite system $\V_{N,1} \ot \cdots \ot \V_{1,1}$
in an initial state $\rho$.
Just after applying the channel $\mU^{(k,r)}_\bin{n<k}$,
the channel $\mQ^{(k,r)} \in \Chn(\W_{k,r},\V_{k,r+1})$ is applied.
For each $r \in \{1,\dots,R-1\}$, the output of $\mQ^{(k,r)}$ is sent through
the channel $\mU^{(k,r+1)}_\bin{n<k}$.
We have
\begin{alignat}{1}
 \mQ^{(k,r)} \ast \mU^{(k,r)}_b &=
 \begin{dcases}
  \Ad_{U^{(k,R)}_0{}^\dagger U^{(k,R)}_b} = \Ad_{\tU^{(k,R)}_b}, & r = R, \\
  \Ad_{Q^{(k,r)} U^{(k,r)}_0{}^\dagger U^{(k,r)}_b} = \Ad_{Q^{(k,r)} \tU^{(k,r)}_b}, & r < R. \\
 \end{dcases}
 \label{eq:QU}
\end{alignat}
This can be explained as follows.
$\mQ^{(k,r)}$ transforms the channels $\mU^{(k,r)}_0$ and $\mU^{(k,r)}_1$ into
$\Ad_{\tU^{(k,r)}_0} = \ident_{\V_{k,r}}$ and $\Ad_{\tU^{(k,r)}_1}$, respectively;
furthermore, if $r < R$, then $\mQ^{(k,r)}$ transforms the normalized eigenvectors
$\ket{\lambda^{(k,r)}_0}$ and $\ket{\lambda^{(k,r)}_1}$ of $\tU^{(k,r)}_1$
into the normalized eigenvectors $\ket{\lambda^{(k,r+1)}_0}$ and $\ket{\lambda^{(k,r+1)}_1}$
of $\tU^{(k,r+1)}_1$, respectively.
Let
\begin{alignat}{1}
 \tU^{(k)}_b &\coloneqq \tU^{(k,R)}_b Q^{(k,R-1)} \tU^{(k,R-1)}_b Q^{(k,R-2)} \tU^{(k,R-2)}_b \cdots
 Q^{(k,1)} \tU^{(k,1)}_b;
\end{alignat}
then, it follows from Eq.~\eqref{eq:QU} that the channel
$\mQ^{(k,R)} \ast \mU^{(k,R)}_\bin{n<k} \ast \mQ^{(k,R-1)} \ast \mU^{(k,R-1)}_\bin{n<k} \ast
\cdots \ast \mQ^{(k,1)} \ast \mU^{(k,1)}_\bin{n<k}$ is equal to $\Ad_{\tU^{(k)}_\bin{n<k}}$.
We also have
\begin{alignat}{1}
 \tU^{(k)}_l \ket{\lambda^{(k,1)}_b} &=
 \begin{dcases}
  \ket{\lambda^{(k,R)}_b}, & l = 0, \\
  \lambda^{(k)}_b \ket{\lambda^{(k,R)}_b}, & l = 1. \\
 \end{dcases}
 \label{eq:tU_lambda}
\end{alignat}
A collective measurement, $\{ \Pi_m \}_{m=0}^N$, is performed on the outputs of $N$ channels
$\mQ^{(1,R)},\dots,\mQ^{(N,R)}$.
Assume that $\rho$ is a pure state $\ket{\psi}\bra{\psi}$.
Since each $\mQ^{(k,r)}$ is fixed, $\{ \mD_m \}_{m=0}^N$ is completely characterized by
$\ket{\psi}$ and $\{ \Pi_m \}_{m=0}^N$.
$\{ \mD_m \}_{m=0}^N$ is nonadaptive when $R = 1$.

Let
\begin{alignat}{1}
 \w_k &\coloneqq \sqrt{\lambda^{(k)}_0 \lambda^{(k)}_1}, \quad
 \zeta_k \coloneqq \frac{1 - \gamma_k}{1 + \gamma_k},
 \quad k \in \{1,\dots,N\}, \nonumber \\
 \nu_{k'} &\coloneqq \frac{1 - \sqrt{\zeta_{k'} \zeta_{k'+1}}}{1 + \sqrt{\zeta_{k'} \zeta_{k'+1}}},
 \quad k' \in \{1,\dots,N-1\}.
 \label{eq:wk}
\end{alignat}
Note that if
$\lambda^{(k)}_0$ and $\lambda^{(k)}_1$ are independent of $k$,
in which case $\gamma_1 = \cdots = \gamma_N \eqqcolon \gamma$ holds,
then $\omega_k$, $\zeta_k$, and $\nu_k$ are also independent of $k$,
and thus we have $\nu_k = \gamma$.
Let $\ket{\psi'_n}$ be the state just prior to the measurement $\{ \Pi_m \}_{m=0}^N$
when the change point is $n$; then, we can see
$\ket{\psi'_n} = \left[ \bigotimes_{k=0}^{N-1} \tU^{(N-k)}_\bin{n<N-k} \right] \ket{\psi}$.
Also, let $G \in \Complex^{(N+1) \times (N+1)}$ be the Gram matrix of
the collection of the states $\ket{\psi'_0}, \dots, \ket{\psi'_N}$,
whose $(n+1)$-th row and $(n'+1)$-th column is
$G_{n,n'} \coloneqq \braket{\psi'_n|\psi'_{n'}}$.

First, we prove the following two lemmas.
\begin{lemma} \label{lemma:Gram}
 Assume that $\gamma_k < 1$ holds for each $k \in \{1,\dots,N\}$; then,
 there exists a tester $\{ \mD_m \}_{m=0}^N \in \mS$ such that
 the Gram matrix $G$ of $\{ \ket{\psi'_0}, \dots, \ket{\psi'_N} \}$ satisfies
 \begin{alignat}{1}
  G_{n,n'} &= \left( \sum_{l=n'+1}^n \gamma_l + 1 \right) \prod_{l=n'+1}^n \sqrt{\zeta_l} \w_l,
  \quad \forall n \ge n'.
  \label{eq:Gram}
 \end{alignat}
\end{lemma}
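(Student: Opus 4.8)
The plan is to compute the Gram matrix $G_{n,n'} = \braket{\psi'_n|\psi'_{n'}}$ explicitly for a suitable choice of input state $\ket{\psi}$ and of the intertwining channels $Q^{(k,r)}$, and to show the resulting entries coincide with the closed form in Eq.~\eqref{eq:Gram}. First I would fix, for each $k$, the pair of eigenvectors $\ket{\lambda^{(k,1)}_0}, \ket{\lambda^{(k,1)}_1}$ of $\tU^{(k,1)}_1$ and set $\ket{\pm_k} \coloneqq (\ket{\lambda^{(k,1)}_0} \pm \ket{\lambda^{(k,1)}_1})/\sqrt 2$; by Eq.~\eqref{eq:tU_lambda} the composite channel $\Ad_{\tU^{(k)}_b}$ acts on the $k$-th tensor factor, with $\tU^{(k)}_0$ mapping $\ket{\lambda^{(k,1)}_b}\mapsto\ket{\lambda^{(k,R)}_b}$ and $\tU^{(k)}_1$ mapping $\ket{\lambda^{(k,1)}_b}\mapsto\lambda^{(k)}_b\ket{\lambda^{(k,R)}_b}$. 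So on the span of $\{\ket{\lambda^{(k,1)}_0},\ket{\lambda^{(k,1)}_1}\}$, after stripping the common unitary carrying the $(k,1)$-eigenbasis to the $(k,R)$-eigenbasis, the "difference" channel $\tU^{(k)}_1$ relative to $\tU^{(k)}_0$ is diagonal with eigenvalues $\lambda^{(k)}_0,\lambda^{(k)}_1$ — exactly the $N=1$ situation with parameter $\gamma_k$ and phase $\w_k=\sqrt{\lambda^{(k)}_0\lambda^{(k)}_1}$. Thus $\langle \lambda^{(k,1)}_{s}\,|\,(\tU^{(k)}_0)^\dagger \tU^{(k)}_1\,|\,\lambda^{(k,1)}_{s'}\rangle$ is $\lambda^{(k)}_s\delta_{s,s'}$, and writing $\tU^{(k)}_1 = \tU^{(k)}_0 \cdot M_k$ with $M_k$ diagonal $\mathrm{diag}(\lambda^{(k)}_0,\lambda^{(k)}_1)$ in the $\pm_k$-adapted basis, a direct computation gives $\braket{+_k|M_k|+_k}=\braket{-_k|M_k|-_k}=\w_k\cdot\tfrac{1}{2}(\lambda^{(k)}_1/\w_k+\lambda^{(k)}_0/\w_k) = \w_k\sqrt{\zeta_k}$ (using $\gamma_k = \tfrac12|\lambda^{(k)}_1/\lambda^{(k)}_0-1|$, so $\tfrac12|\lambda^{(k)}_0+\lambda^{(k)}_1|/|\w_k| = \sqrt{1-\gamma_k^2}/\!\sqrt{\text{something}}$ — the right normalization to chase is that $\tfrac{1}{2}(\sqrt{\lambda^{(k)}_1/\lambda^{(k)}_0}+\sqrt{\lambda^{(k)}_0/\lambda^{(k)}_1})$ has modulus $\sqrt{\zeta_k}$ after extracting the phase $\w_k$), while the off-diagonal $\braket{+_k|M_k|-_k}$ has modulus $\gamma_k$. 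The key structural point is that $\ket{\psi'_n}$ and $\ket{\psi'_{n'}}$ with $n\ge n'$ differ only in the tensor factors $k = n'+1,\dots,n$, where one has $\tU^{(k)}_0$ and the other $\tU^{(k)}_1$; on all other factors the overlap is $1$.

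Next I would choose $\ket{\psi}$ of the product-of-correlated-pairs form suggested by Eq.~\eqref{eq:psi_main}, i.e.\ a state in $\bigotimes_k \mathrm{span}\{\ket{+_k},\ket{-_k}\}$ supported on strings $(s_1,\dots,s_N)\in\{-1,1\}^N$ with an even number of $-1$'s, with amplitudes $a_{s_1,\dots,s_N} \propto \prod_{k=1}^{N-1} (s_k s_{k+1}\sqrt{\nu_k}+1)$ in the $\ket{\phi_{s_k}}$-basis (the $\nu_k$ of Eq.~\eqref{eq:wk} replacing the single $\gamma$ of the main text, which is forced once the $\zeta_k$ vary with $k$). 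With this ansatz, $G_{n,n'}$ becomes a sum over even-weight strings $s$ of $\overline{a_s}\, a_{s'}$ times $\prod_{k=n'+1}^n (\text{diagonal or off-diagonal factor depending on } s_k = s'_k)$; because the measurement/output state only "twists" factors $n'+1,\dots,n$, this reduces to a transfer-matrix / partition-function computation on the one-dimensional chain of sites $n'+1,\dots,n$ with $2\times2$ transfer matrices built from the entries $\w_k\sqrt{\zeta_k}$ (diagonal) and $\w_k\sqrt{\zeta_k}\gamma_k$ (off-diagonal, up to the phase absorbed into $\w_k$) and the amplitude couplings $\sqrt{\nu_k}$. The claim $G_{n,n'} = (\sum_{l=n'+1}^n \gamma_l + 1)\prod_{l=n'+1}^n \sqrt{\zeta_l}\,\w_l$ is then an identity about that chain: the $\prod_l \sqrt{\zeta_l}\w_l$ prefactor is exactly the product of the "diagonal" weights, and the bracket $\sum_l \gamma_l + 1$ is the statement that summing over which single site (if any) is "flipped" contributes $\gamma_l$ per site plus the all-unflipped term $1$, with all multi-flip contributions cancelling against the odd/even-string structure and the $\sqrt{\nu_k}$ choice. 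I would verify this either by induction on $n-n'$ (peeling off site $n$ and using the recursion that the $\nu_k$ satisfy by construction) or by a direct transfer-matrix diagonalization.

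The main obstacle I anticipate is bookkeeping the phases and normalizations correctly: one must check that the phase $\w_k$ genuinely factors out of every overlap (so that $|G_{n,n'}|$ and $G_{n,n'}$ agree up to the stated product of $\w_l$'s), that the even-weight support of $\ket{\psi}$ is preserved under the twists so no extra cross terms appear, and that the constants $\nu_{k'}$ in Eq.~\eqref{eq:wk} are precisely the ones that make the multi-flip terms telescope — this is the place where an arbitrary choice would give a messier Gram matrix, and getting Eq.~\eqref{eq:Gram} on the nose pins down $\nu_{k'} = (1-\sqrt{\zeta_{k'}\zeta_{k'+1}})/(1+\sqrt{\zeta_{k'}\zeta_{k'+1}})$. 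Once the Gram matrix has the stated form, existence of the tester $\{\mD_m\}_{m=0}^N\in\mS$ realizing it is automatic, since any Gram matrix of unit vectors is realized by some collection of states, and $\mS$ was defined to allow an arbitrary pure input $\ket{\psi}$ together with the fixed channels $\mQ^{(k,r)}$ (which already encode the required $Q^{(k,r)}$). A subsidiary check is that the degenerate cases $\gamma_k\to1$ are excluded by hypothesis, so all $\zeta_k\in(0,1)$ and every quantity in Eq.~\eqref{eq:wk} is well defined.
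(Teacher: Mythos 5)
Your plan follows essentially the same route as the paper's proof: the same input state supported on even-weight sign strings with amplitudes built from the $\nu_k$ of Eq.~\eqref{eq:wk}, the same observation that $\ket{\psi'_n}$ and $\ket{\psi'_{n'}}$ differ only on tensor factors $n'+1,\dots,n$, and the same transfer-matrix/telescoping identity showing that after extracting $\prod_l \sqrt{\zeta_l}\,\w_l$ the surviving combinatorial sum collapses to $\sum_l \gamma_l + 1$ (the paper's $g_k(x_1,\dots,x_k)=\sum_l x_l+1$ recursion). The one slip is the intermediate diagonal matrix element, which equals $\tfrac12\bigl(\lambda^{(k)}_0+\lambda^{(k)}_1\bigr)=\w_k\sqrt{\zeta_k}\,(1+\gamma_k)$ rather than $\w_k\sqrt{\zeta_k}$ — precisely the normalization you flagged as needing to be chased, and it works out to the stated formula once tracked correctly.
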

\begin{proof}
 We will show that $G$ satisfies Eq.~\eqref{eq:Gram} when we choose
 \begin{alignat}{1}
  \ket{\psi} &\coloneqq \frac{1}{\sqrt{2^{N-1}}} \sum_{s_1 \in \{-1,1\}} \dots \sum_{s_N \in \{-1,1\}}
  \iota(s_1,\dots,s_N)
  \left[ \prod_{k=1}^{N-1} \frac{s_k s_{k+1} \sqrt{\nu_k} + 1}{\sqrt{\nu_k + 1}} \right]
  \frac{\ket{\lambda^{(N,1)}_0} + s_N \ket{\lambda^{(N,1)}_1}}{\sqrt{2}} \cdots
  \frac{\ket{\lambda^{(1,1)}_0} + s_1 \ket{\lambda^{(1,1)}_1}}{\sqrt{2}},
  \label{eq:psi0}
 \end{alignat}
 where $\iota(s_1,\dots,s_N)$ is $1$ if the number of $s_i$ equal to $-1$ is even,
 and $0$ otherwise.
 We can rewrite Eq.~\eqref{eq:psi0} as
 \begin{alignat}{1}
  \ket{\psi} &= \frac{1}{\sqrt{2}} \sum_{b_1=0}^1 \dots \sum_{b_N=0}^1
  \left( \prod_{k=1}^{N-1} \sqrt{\frac{\nu_k^{|b_{k+1}-b_k|}}{\nu_k + 1}} \right)
  \ket{\lambda^{(N,1)}_{b_N}} \cdots \ket{\lambda^{(1,1)}_{b_1}}.
  \label{eq:psi}
 \end{alignat}
 We have that, from Eq.~\eqref{eq:tU_lambda},
 \begin{alignat}{1}
  \ket{\psi'_n} &= \left[ \bigotimes_{k=0}^{N-1} \tU^{(N-k)}_\bin{n<N-k} \right] \ket{\psi}
  = \frac{1}{\sqrt{2}} \sum_{b_1=0}^1 \dots \sum_{b_N=0}^1
  \left[ \prod_{l=n+1}^N \lambda^{(l)}_{b_l} \right]
  \left( \prod_{k=1}^{N-1} \sqrt{\frac{\nu_k^{|b_{k+1}-b_k|}}{\nu_k + 1}} \right)
  \ket{\lambda^{(N,R)}_{b_N}} \cdots \ket{\lambda^{(1,R)}_{b_1}}.
  \label{eq:psi_}
 \end{alignat}
 It follows from Eq.~\eqref{eq:psi_} that $G$ is a Hermitian matrix satisfying
 $G_{n,n} = 1$ for each $n$ and
 \begin{alignat}{1}
  G_{n,n'} &= \frac{1}{2} \sum_{b_1=0}^1 \dots \sum_{b_N=0}^1
  \left[ \prod_{l=n'+1}^n \lambda^{(l)}_{b_l} \right]
  \prod_{k=1}^{N-1} \frac{\nu_k^{|b_{k+1}-b_k|}}{\nu_k + 1}
  = \frac{1}{2} \sum_{b_{n'+1}=0}^1 \dots \sum_{b_n=0}^1
  \left[ \prod_{l=n'+1}^n \lambda^{(l)}_{b_l} \right] 
  \prod_{k=n'+1}^{n-1} \frac{\nu_k^{|b_{k+1}-b_k|}}{\nu_k + 1}
  \label{eq:Gjk}
 \end{alignat}
 for each $n > n'$, where the second equality follows from
 \begin{alignat}{1}
  \sum_{b_k=0}^1 \frac{\nu_k^{|b_{k+1}-b_k|}}{\nu_k + 1} = 1, \qquad
  \sum_{b_{k+1}=0}^1 \frac{\nu_k^{|b_{k+1}-b_k|}}{\nu_k + 1} = 1.
 \end{alignat}
 Transforming the form of the sum of products of
 $\prod_{l=n'+1}^n \lambda^{(l)}_{b_l}$ into
 the form of the sum of products of
 $\prod_{l=n'+1}^n \left[ \lambda^{(l)}_1 \pm \lambda^{(l)}_0 \right]$
 yields
 \begin{alignat}{1}
  \sum_{b_{n'+1}=0}^1 \dots \sum_{b_n=0}^1
  \left[ \prod_{l=n'+1}^n \lambda^{(l)}_{b_l} \right]
  \prod_{k=n'+1}^{n-1} \nu_k^{|b_{k+1}-b_k|}
  &= \frac{1}{2^{n-n'-1}} \sum_{s_{n'} \in S_{n'}} \sum_{s_{n'+1} \in S_{n'+1}} \dots \sum_{s_n \in S_n}
  \left[ \prod_{l=n'+1}^n \left[ \lambda^{(l)}_1 + s_{l-1}s_l \lambda^{(l)}_0 \right] \right]
  \prod_{k=n'+1}^{n-1} (s_k \nu_k + 1)
  \label{eq:Gjk2}
 \end{alignat}
 with
 \begin{alignat}{1}
  S_k &\coloneqq
  \begin{dcases}
   \{-1,1\}, & n' < k < n, \\
   \{1\}, & \text{otherwise}. \\
  \end{dcases}
 \end{alignat}
 From Eq.~\eqref{eq:wk}, we have
 \begin{alignat}{1}
  \frac{s_k \nu_k + 1}{\nu_k + 1} &=
  \begin{dcases}
   1, & s_k = 1, \\
   \sqrt{\zeta_k \zeta_{k+1}}, & s_k = -1,
  \end{dcases} \nonumber \\
  \lambda^{(l)}_1 + \lambda^{(l)}_0 &= 2 \sqrt{\zeta_l} \w_l (1 + \gamma_l), \nonumber \\
  \lambda^{(l)}_1 - \lambda^{(l)}_0 &= 2 \i \w_l \gamma_l,
 \end{alignat}
 where $\i \coloneqq \sqrt{-1}$.
 Using these equations, we can rewrite Eq.~\eqref{eq:Gjk} as
 \begin{alignat}{1}
  G_{n,n'} &= 
  \frac{1}{2^{n-n'}} \sum_{s_{n'} \in S_{n'}} \sum_{s_{n'+1} \in S_{n'+1}} \dots \sum_{s_n \in S_n}
  \left[ \prod_{l=n'+1}^n \left[ \lambda^{(l)}_1 + s_{l-1}s_l \lambda^{(l)}_0 \right] \right]
  \prod_{k=n'+1}^{n-1} \frac{s_k \nu_k + 1}{\nu_k + 1} \nonumber \\
  &= g_{n-n'}(\gamma_{n'+1},\dots,\gamma_n) \prod_{l=n'+1}^n \sqrt{\zeta_l} \w_l,
  \label{eq:Gjk3}
 \end{alignat}
 where
 \begin{alignat}{1}
  g_k(x_1,\dots,x_k) &\coloneqq \sum_{s_1 \in \{-1,1\}} \dots \sum_{s_k \in \{-1,1\}}
  \beta_{1,s_1}(x_1) \prod_{l=2}^k \beta_{s_{l-1},s_l}(x_l), \nonumber \\
  \beta_{s,s'}(x) &\coloneqq \frac{ss' + 1}{2} + s' x =
  \begin{dcases}
   1 - x, & s = -1 ~\text{and}~ s' = -1, \\
   x, & s = -1 ~\text{and}~ s' = 1, \\
   - x, & s = 1 ~\text{and}~ s' = -1, \\
   1 + x, & s = 1 ~\text{and}~ s' = 1. \\
  \end{dcases}
 \end{alignat}
 Since
 \begin{alignat}{1}
  g_k(x_1,\dots,x_k) &= (x_k + 1) g_{k-1}(x_1,\dots,x_{k-1})
  - x_k g_{k-1}(x_1,\dots,x_{k-2},x_{k-1}-1), \quad k \ge 2
 \end{alignat}
 and $g_1(x_1) = x_1 + 1$ hold, we have
 \begin{alignat}{1}
  g_k(x_1,\dots,x_k) &= \sum_{l=1}^k x_l + 1.
 \end{alignat}
 Substituting this into Eq.~\eqref{eq:Gjk3} gives Eq.~\eqref{eq:Gram}.
\end{proof}

\begin{lemma} \label{lemma:pmn}
 Assume that given states $\{ \ket{\psi'_0}, \dots, \ket{\psi'_N} \}$ have
 the Gram matrix $G$ determined by Eq.~\eqref{eq:Gram}; then,
 there exists an orthonormal system $\{ \ket{\pi_0}, \dots, \ket{\pi_N} \}$
 satisfying
 \begin{alignat}{1}
  |\braket{\pi_m|\psi'_n}|^2 &= p_{m|n} \coloneqq \frac{\gamma_m + \gamma_{m+1}}{2}
  \prod_{l=\min(m,n)+1}^{\max(m,n)} \zeta_l.
  \label{eq:pmn}
 \end{alignat}
\end{lemma}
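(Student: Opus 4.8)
The conclusion is invariant under a common unitary rotation of the states $\ket{\psi'_0},\dots,\ket{\psi'_N}$, and by hypothesis their Gram matrix is the matrix $G$ of Eq.~\eqref{eq:Gram}. Hence it suffices to exhibit \emph{any} tuple of vectors with Gram matrix $G$ together with an orthonormal system achieving Eq.~\eqref{eq:pmn}; equivalently, to produce a square matrix $B\in\Complex^{(N+1)\times(N+1)}$ with $B^\dagger B=G$ and $|B_{m,n}|^2=p_{m|n}$. Indeed, one then takes $\ket{\psi'_n}$ to be the $n$-th column of $B$ and $\ket{\pi_m}=\ket{m}$, so that $\braket{\pi_m|\psi'_n}=B_{m,n}$. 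The plan is to build $B$ explicitly.

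First I strip the phases. The eigenvalues $\lambda^{(k,r)}_b$ lie on the unit circle, so each $\omega_k=\sqrt{\lambda^{(k)}_0\lambda^{(k)}_1}$ has modulus one; put $D\coloneqq\mathrm{diag}(d_0,\dots,d_N)$ with $d_0\coloneqq 1$ and $d_n\coloneqq\prod_{l=1}^{n}\omega_l$. A direct computation from Eq.~\eqref{eq:Gram}, using $|d_n|=1$, shows that $\tilde G\coloneqq D^\dagger G D$ is the real symmetric matrix with
\begin{equation*}
 \tilde G_{n,n'}=\left(1+\sum_{l=n'+1}^{n}\gamma_l\right)\prod_{l=n'+1}^{n}\sqrt{\zeta_l}\qquad(n\ge n'),
\end{equation*}
which is entrywise positive because $\gamma_l\ge 0$ and $\zeta_l\in(0,1]$ throughout this subsection. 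Since $G=D\tilde G D^\dagger$, any real $A$ with $A^\T A=\tilde G$ yields $B\coloneqq AD^\dagger$ with $B^\dagger B=G$ and $|B_{m,n}|=|A_{m,n}|$. So the goal reduces to finding a nonnegative real $A$ with $A_{m,n}^2=p_{m|n}$ and $A^\T A=\tilde G$, and the natural candidate is $A_{m,n}\coloneqq\sqrt{p_{m|n}}$.

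The core of the argument is then to verify $\sum_{m=0}^{N}\sqrt{p_{m|n}\,p_{m|n'}}=\tilde G_{n,n'}$ for $n\ge n'$. I would split the sum over the three ranges $m<n'$, $n'\le m\le n$, and $m>n$; inspecting Eq.~\eqref{eq:pmn} in each case shows that $\sqrt{p_{m|n}p_{m|n'}}$ factors as $\tfrac12(\gamma_m+\gamma_{m+1})\bigl(\prod_{l=n'+1}^{n}\sqrt{\zeta_l}\bigr)c_m$, with $c_m=\prod_{l=m+1}^{n'}\zeta_l$, $c_m=1$, or $c_m=\prod_{l=n+1}^{m}\zeta_l$, respectively. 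The middle block sums to $\tfrac12(\gamma_{n'}+\gamma_{n+1})+\sum_{l=n'+1}^{n}\gamma_l$, so after cancelling the common factor $\prod_{l=n'+1}^{n}\sqrt{\zeta_l}$ it remains to prove the ``tail'' identity
\begin{equation*}
 \sum_{m=0}^{j-1}\frac{\gamma_m+\gamma_{m+1}}{2}\prod_{l=m+1}^{j}\zeta_l+\frac{\gamma_j}{2}=\frac12
\end{equation*}
for $j=n'$, together with its reflection handling the right tail (the same identity for the reversed sequence, using $\gamma_{N+1}=1$ exactly as the left one uses $\gamma_0=1$). Both follow by a one-line induction on $j$ with base case $\gamma_0=1$ and the step relation $\zeta_l(1+\gamma_l)=1-\gamma_l$. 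Adding the three blocks gives precisely $1+\sum_{l=n'+1}^{n}\gamma_l$, hence $A^\T A=\tilde G$, and the reduction above then delivers the orthonormal system $\{\ket{\pi_m}\}_{m=0}^N$ satisfying Eq.~\eqref{eq:pmn}. I expect the only real difficulty to be the bookkeeping in this split-and-telescope computation — keeping the conventions $\gamma_0=\gamma_{N+1}=1$ straight and confirming that $A$ can be taken with all entries nonnegative, which works precisely because $\tilde G$ came out entrywise positive. (As a sanity check, $\tilde G_{n,n}=1$ forces $\sum_m p_{m|n}=1$, consistent with $\braket{\psi'_n|\psi'_n}=1$.)
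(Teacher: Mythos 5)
Your proposal is correct and follows essentially the same route as the paper: the paper likewise constructs explicit vectors with components $\sqrt{p_{k|n}}$ times unimodular phases built from the $\w_l$ (your $B=AD^\dagger$ is the paper's $\ket{\varphi_n}$ up to an irrelevant left phase), verifies that their Gram matrix equals $G$ by the same split-and-telescope computation based on $(1+\gamma_k)\zeta_k=1-\gamma_k$ with the conventions $\gamma_0=\gamma_{N+1}=1$, and then transports the standard basis through the unitary relating two tuples with equal Gram matrices. The only cosmetic difference is that you strip the phases into a diagonal unitary before doing the real computation, whereas the paper carries them along inside $\ket{\varphi_n}$.
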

\begin{proof}
 Let $\ket{0},\dots,\ket{N}$ be an orthonormal system of
 $\V_{N,R+1} \ot \cdots \ot \V_{1,R+1}$.
 For each $n \in \{0,\dots,N\}$, let
 \begin{alignat}{1}
  \ket{\varphi_n} &\coloneqq \sum_{k=0}^N \sqrt{p_{k|n}} \left( \prod_{l=1}^k \w_l \right)
  \left( \prod_{l'=1}^n \w_{l'}^* \right) \ket{k},
 \end{alignat}
 where $^*$ is the complex conjugate.
 Assume that the Gram matrix of $\{ \ket{\varphi_0},\dots,\ket{\varphi_N}\}$ is
 equal to $G$ of Eq.~\eqref{eq:Gram};
 then, it follows that there exists a unitary matrix $\Upsilon$ satisfying
 $\ket{\psi'_n} = \Upsilon \ket{\varphi_n}$ for each $n$
 (see, e.g., Sec.~7.3 of Ref.~\cite{Hor-Joh-2012}).
 Fix such $\Upsilon$ and let
 \begin{alignat}{1}
  \ket{\pi_m} &\coloneqq \Upsilon \ket{m}.
  \label{eq:pim}
 \end{alignat}
 Note that $\{ \ket{\pi_0}, \dots, \ket{\pi_N} \}$ is obviously an orthonormal system.
 Since
 \begin{alignat}{1}
  \ket{\psi'_n} &= \Upsilon \ket{\varphi_n} = \sum_{k=0}^N \sqrt{p_{k|n}} \left( \prod_{l=1}^k \w_l \right)
  \left( \prod_{l'=1}^n \w_{l'}^* \right) \ket{\pi_m}
 \end{alignat}
 holds, we have Eq.~\eqref{eq:pmn}.

 It remains to show that the Gram matrix of $\{ \ket{\varphi_0},\dots,\ket{\varphi_N}\}$
 is equal to $G$.
 Since the Gram matrix is Hermitian, it suffices to show
 $\braket{\varphi_n|\varphi_{n'}} = G_{n,n'}$ $~(\forall n \ge n')$.
 We have that, for each $n$ and $n'$ with $n \ge n'$,
 \begin{alignat}{1}
  \braket{\varphi_n|\varphi_{n'}} &= \sum_{k=0}^N \sqrt{p_{k|n} p_{k|n'}} \prod_{t=n'+1}^n \w_t
  = \left( \prod_{t=n'+1}^n \sqrt{\zeta_t} \w_t \right) \sum_{k=0}^N \frac{\gamma_k + \gamma_{k+1}}{2}
  \left( \prod_{l=k+1}^{n'} \zeta_l \right) \left( \prod_{l'=n+1}^k \zeta_{l'} \right).
  \label{varphi_nn}
 \end{alignat}
 We have that, from $(1 + \gamma_k) \zeta_k = 1 - \gamma_k$,
 \begin{alignat}{1}
  \sum_{k=0}^{n'} \frac{\gamma_k + \gamma_{k+1}}{2}
  \left( \prod_{l=k+1}^{n'} \zeta_l \right) \left( \prod_{l'=n+1}^k \zeta_{l'} \right)
  &= \sum_{k=0}^{n'} \frac{\gamma_k + \gamma_{k+1}}{2}
  \left( \prod_{l=k+1}^{n'} \zeta_l \right) \nonumber \\
  &= \frac{1 - \gamma_1}{2} \left( \prod_{l=2}^{n'} \zeta_l \right)
  + \sum_{k=1}^{n'} \frac{\gamma_k + \gamma_{k+1}}{2}
  \left( \prod_{l=k+1}^{n'} \zeta_l \right) \nonumber \\
  &= \frac{1 - \gamma_2}{2} \left( \prod_{l=3}^{n'} \zeta_l \right)
  + \sum_{k=2}^{n'} \frac{\gamma_k + \gamma_{k+1}}{2}
  \left( \prod_{l=k+1}^{n'} \zeta_l \right)
  = \dots = \frac{1 + \gamma_{n'+1}}{2}, \nonumber \\
  \sum_{k=n}^N \frac{\gamma_k + \gamma_{k+1}}{2}
  \left( \prod_{l=k+1}^{n'} \zeta_l \right) \left( \prod_{l'=n+1}^k \zeta_{l'} \right)
  &= \sum_{k=n}^N \frac{\gamma_k + \gamma_{k+1}}{2}
  \left( \prod_{l'=n+1}^k \zeta_{l'} \right) \nonumber \\
  &= \frac{1 - \gamma_N}{2} \left( \prod_{l'=n+1}^{N-1} \zeta_{l'} \right)
  \sum_{k=n}^{N-1} \frac{\gamma_k + \gamma_{k+1}}{2}
  \left( \prod_{l'=n+1}^k \zeta_{l'} \right) \nonumber \\
  &= \frac{1 - \gamma_{N-1}}{2} \left( \prod_{l'=n+1}^{N-2} \zeta_{l'} \right)
  \sum_{k=n}^{N-2} \frac{\gamma_k + \gamma_{k+1}}{2}
  \left( \prod_{l'=n+1}^k \zeta_{l'} \right)
  = \dots = \frac{1 + \gamma_n}{2}.
 \end{alignat}
 Thus,
 \begin{alignat}{1}
  \sum_{k=0}^N \frac{\gamma_k + \gamma_{k+1}}{2}
  \left( \prod_{l=k+1}^{n'} \zeta_l \right) \left( \prod_{l'=n+1}^k \zeta_{l'} \right)
  &= \frac{1 + \gamma_{n'+1}}{2} + \sum_{k=n'+1}^{n-1} \frac{\gamma_k + \gamma_{k+1}}{2}
  + \frac{1 + \gamma_n}{2}
  = \sum_{k=n'+1}^n \gamma_l + 1
 \end{alignat}
 holds.
 Substituting this into Eq.~\eqref{varphi_nn} and using Eq.~\eqref{eq:Gram}
 gives $\braket{\varphi_n|\varphi_{n'}} = G_{n,n'}$.
\end{proof}

Next, using the above lemmas, we prove the following lemma.
\begin{lemma} \label{lemma:primal}
 There exists a tester $\{ \mD_m \}_{m=0}^N \in \mS$
 whose success probability is $(\sum_{k=1}^N \gamma_k + 1) / (N + 1)$.
\end{lemma}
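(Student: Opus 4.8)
The plan is to assemble the desired tester by combining Lemmas~\ref{lemma:Gram} and~\ref{lemma:pmn}, working under the standing assumption $\gamma_k < 1$ for all $k$. First, Lemma~\ref{lemma:Gram} supplies the pure input state $\ket{\psi}$ of Eq.~\eqref{eq:psi0} for which the states $\ket{\psi'_0},\dots,\ket{\psi'_N}$ entering the final measurement of a tester in $\mS$ have Gram matrix $G$ given by Eq.~\eqref{eq:Gram}. Lemma~\ref{lemma:pmn} then produces an orthonormal system $\{\ket{\pi_0},\dots,\ket{\pi_N}\}$ with $|\braket{\pi_m|\psi'_n}|^2 = p_{m|n}$ as in Eq.~\eqref{eq:pmn}. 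Since a tester in $\mS$ is completely determined by $\ket{\psi}$ together with its final measurement, it remains only to upgrade $\{\ket{\pi_m}\}_{m=0}^N$ into a measurement.

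Second, I would set $\Pi_m \coloneqq \ket{\pi_m}\bra{\pi_m}$ for $0 \le m \le N$; this gives $\sum_{m=0}^N \Pi_m \le \ident$, and one completes it to a genuine measurement by adding the complementary projector $\ident - \sum_{m=0}^N \ket{\pi_m}\bra{\pi_m}$ to a single outcome. This is harmless because each $\ket{\psi'_n}$ already lies in $\spn\{\ket{\pi_0},\dots,\ket{\pi_N}\}$ (immediate from the way the $\ket{\pi_m}$ are built in the proof of Lemma~\ref{lemma:pmn}), so the complementary subspace is orthogonal to every output state and the padding alters no detection probability. The resulting $\{\mD_m\}_{m=0}^N$ is then a tester in $\mS$ whose conditional probabilities are exactly $p_{m|n}$.

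Finally, its success probability is $\frac{1}{N+1}\sum_{n=0}^N p_{n|n}$. The empty product in Eq.~\eqref{eq:pmn} gives $p_{n|n} = (\gamma_n + \gamma_{n+1})/2$, so with the convention $\gamma_0 = \gamma_{N+1} = 1$ the sum telescopes to
\begin{alignat}{1}
 \sum_{n=0}^N \frac{\gamma_n + \gamma_{n+1}}{2}
 &= \frac{1}{2}\left( \gamma_0 + 2\sum_{k=1}^N \gamma_k + \gamma_{N+1} \right)
 = \sum_{k=1}^N \gamma_k + 1 ,
\end{alignat}
giving success probability $(\sum_{k=1}^N \gamma_k + 1)/(N+1)$, as claimed.

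Almost all of the real work is already carried by the two preceding lemmas, so the only point that requires genuine care here is the completion step: checking that $\spn\{\ket{\pi_m}\}_{m=0}^N$ contains all of the $\ket{\psi'_n}$, so that extending to a POVM does not perturb the success probability, and that the object produced still meets the structural constraints defining $\mS$ (a pure $\rho$ and an arbitrary final measurement are permitted). The hypothesis $\gamma_k < 1$ is the standing assumption of this subsection; if one wants the conclusion without it, a degenerate index $\gamma_k = 1$ means $\mU^{(k)}_0$ and $\mU^{(k)}_1$ --- hence $n < k$ versus $n \ge k$ --- are perfectly distinguishable, so the task splits at such $k$ into subproblems of the same form, or one perturbs $U^{(k,r)}_1$ so that every $\gamma_k < 1$, applies the construction, and passes to a limit point of the testers using compactness.
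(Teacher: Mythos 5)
Your proposal is correct and takes essentially the same route as the paper's proof: combine Lemmas~\ref{lemma:Gram} and~\ref{lemma:pmn}, complete $\{\ket{\pi_m}\bra{\pi_m}\}_{m=0}^N$ to a measurement by absorbing the complementary projector into one outcome (the paper folds it into $\Pi_N$ and notes any $\Pi_m \ge \ket{\pi_m}\bra{\pi_m}$ works), sum $p_{n|n} = (\gamma_n+\gamma_{n+1})/2$ with $\gamma_0 = \gamma_{N+1} = 1$, and handle $\gamma_k = 1$ by splitting the problem at $k$ into two subproblems of the same form. Your explicit justification that each $\ket{\psi'_n}$ lies in $\spn\{\ket{\pi_m}\}_{m=0}^N$ is a small but welcome addition the paper leaves implicit.
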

\begin{proof}
 First, we consider the case in which, for any $k \in \{1,\dots,N\}$,
 $\mU^{(k)}_0$ and $\mU^{(k)}_1$ are not perfectly distinguishable,
 i.e., $\gamma_k < 1$ holds.
 Let us consider the tester $\{ \mD_m \}_{m=0}^N \in \mS$
 that is determined by the input state $\ket{\psi}$ given by Eq.~\eqref{eq:psi0}
 and the measurement $\{ \Pi_m \}_{m=0}^N$
 with $\Pi_m \coloneqq \ket{\pi_m} \bra{\pi_m}$ $~(\forall m \in \{0,\dots,N-1\})$
 and $\Pi_N \coloneqq \ident_{\V_{N,R+1} \ot \cdots \ot \V_{1,R+1}} - \sum_{m=0}^{N-1} \Pi_m$,
 where $\ket{\pi_m}$ is given by Eq.~\eqref{eq:pim}.
 [Note that we can choose any $\{ \Pi_m \}_{m=0}^N$ such that $\Pi_m \ge \ket{\pi_m} \bra{\pi_m}$
 $~(\forall m \in \{0,\dots,N\})$ and
 $\sum_{m=0}^N \Pi_m = \ident_{\V_{N,R+1} \ot \cdots \ot \V_{1,R+1}}$.]
 From Lemmas~\ref{lemma:Gram} and \ref{lemma:pmn},
 the conditional probability that the measurement result is $m$ given that
 the change point is $n$ is given by $|\braket{\pi_m|\psi'_n}|^2 = p_{m|n}$.
 Thus, the average probability is
 \begin{alignat}{1}
  \frac{1}{N + 1} \sum_{n=0}^N p_{n|n} &=
  \frac{1}{N + 1} \sum_{n=0}^N \frac{\gamma_n + \gamma_{n+1}}{2}
  = \frac{1}{N + 1} \left( \sum_{n=1}^N \gamma_n + 1 \right).
 \end{alignat}
 
 Next, we consider the case in which there exists $\tk$ satisfying $\gamma_{\tk} = 1$.
 In the following discussion, we assume for simplicity that $\gamma_k < 1$ holds
 for each $k \neq \tk$, but the same discussion can be easily applied to any other case.
 Since $\mU^{(\tk)}_0$ and $\mU^{(\tk)}_1$ are perfectly distinguishable,
 we can distinguish $n < \tk$ and $n \ge \tk$ with certainty.
 Thus, Problem~\eqref{prob:P} can be divided into the problem of
 discriminating $\tk$ processes
 $\{ \mU^{(\tk-1)}_\bin{n < \tk-1} \ast \cdots \ast \mU^{(1)}_\bin{n < 1} \}_{n=0}^{\tk-1}$
 and the problem of discriminating $N + 1 - \tk$ processes
 $\{ \mU^{(N)}_\bin{n < N} \ast \cdots \ast \mU^{(\tk+1)}_\bin{n < \tk + 1} \}_{n=\tk}^N$.
 As mentioned above, it follows that, for the former and latter problems,
 there exist testers with the success probabilities
 $(\sum_{n=1}^{\tk-1} \gamma_n + 1) / \tk$ and
 $(\sum_{n=\tk+1}^N \gamma_n + 1) / (N + 1 - \tk)$, respectively.
 Thus, for Problem~\eqref{prob:P}, there exists a tester whose success probability is
 \begin{alignat}{1}
  \frac{\tk}{N + 1} \cdot \frac{1}{\tk} \left( \sum_{n=1}^{\tk-1} \gamma_n + 1 \right)
  + \frac{N + 1 - \tk}{N + 1} \cdot \frac{1}{N + 1 - \tk} \left( \sum_{n=\tk+1}^N \gamma_n + 1 \right)
  &= \frac{1}{N + 1} \left( \sum_{n=1}^N \gamma_n + 1 \right).
 \end{alignat}
\end{proof}

\subsection{Feasible solution to the dual problem} \label{subsec:thm_dual}

For each $n \in \{1,\dots,N\}$, let $Y^{(n)}$ be an optimal solution
to the following optimization problem:
\begin{alignat}{1}
 \begin{array}{ll}
  \mbox{minimize} & \eta(Y) \\
  \mbox{subject~to} & \displaystyle Y \in \tmC^{(n)},
   ~Y \ge \frac{\ChoiU^{(n)}_0}{2}, ~Y \ge \frac{\ChoiU^{(n)}_1}{2}, \\
 \end{array}
 \label{prob:DY}
\end{alignat}
where
\begin{alignat}{1}
 \tmC^{(n)} &\coloneqq \Cone(\Comb_{\W_{n,R},\V_{n,R};\dots;\W_{n,1},\V_{n,1}}).
\end{alignat}
This problem can be seen as the Lagrange dual of the problem of
discriminating the two processes $\ChoiU^{(n)}_0$ and $\ChoiU^{(n)}_1$ with equal prior probabilities.
As already mentioned at the beginning of this section,
the optimal value of this primal problem is $(\gamma_n + 1) / 2$.
Since the duality gap is zero, we have $\eta[Y^{(n)}] = (\gamma_n + 1) / 2$.

Let
\begin{alignat}{1}
 X^{(n)} &\coloneqq
 \begin{dcases}
  Y^{(1)}, & n = 1, \\
  \frac{1}{n+1} \left[ \ChoiU^{(n)}_1 \ot n X^{(n-1)}
  + \left[ 2 Y^{(n)} - \ChoiU^{(n)}_1 \right] \ot \ChoiE^{(n-1)}_{n-1} \right],
  & n \ge 2; \\
 \end{dcases}
 \label{eq:Xn}
\end{alignat}
then, we have the following lemma.
\begin{lemma} \label{lemma:XN}
 $X^{(N)}$ is a feasible solution to Problem~\eqref{prob:D}.
\end{lemma}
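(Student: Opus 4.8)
The plan is to prove feasibility by induction on $N$, checking the two requirements of Problem~\eqref{prob:D} directly: that $X^{(N)}\in\mC^{(N)}$ and that $X^{(N)}\ge\ChoiE^{(N)}_m/(N+1)$ for every $m\in\{0,\dots,N\}$. The base case $N=1$ is immediate, since $X^{(1)}=Y^{(1)}$ is a solution of Problem~\eqref{prob:DY} with $n=1$: there $\mC^{(1)}=\tmC^{(1)}$, while $\ChoiE^{(1)}_0=\ChoiU^{(1)}_1$ and $\ChoiE^{(1)}_1=\ChoiU^{(1)}_0$, so the constraints $Y^{(1)}\ge\ChoiU^{(1)}_0/2$ and $Y^{(1)}\ge\ChoiU^{(1)}_1/2$ are precisely $X^{(1)}\ge\ChoiE^{(1)}_m/2$ for $m=0,1$.

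For the inductive step I would assume that $X^{(n-1)}$ is feasible for Problem~\eqref{prob:D} with $N=n-1$, and first record the factorizations that follow from Eq.~\eqref{eq:tmEN}: $\ChoiE^{(n)}_m=\ChoiU^{(n)}_1\ot\ChoiE^{(n-1)}_m$ for $m<n$, and $\ChoiE^{(n)}_n=\ChoiU^{(n)}_0\ot\ChoiE^{(n-1)}_{n-1}$, where $\ChoiE^{(n-1)}_{n-1}=\ChoiU^{(n-1)}_0\ot\cdots\ot\ChoiU^{(1)}_0$ is itself a quantum comb. It is convenient to rewrite Eq.~\eqref{eq:Xn} as
\begin{alignat}{1}
 (n+1)X^{(n)} &= 2Y^{(n)}\ot\ChoiE^{(n-1)}_{n-1}
 + \ChoiU^{(n)}_1\ot\bigl[\,nX^{(n-1)}-\ChoiE^{(n-1)}_{n-1}\,\bigr]. \nonumber
\end{alignat}
Both summands are positive semidefinite: the first because $Y^{(n)}\ge0$ and $\ChoiE^{(n-1)}_{n-1}\ge0$, the second because $nX^{(n-1)}\ge\ChoiE^{(n-1)}_{n-1}$ is the $m=n-1$ constraint of the inductive hypothesis and $\ChoiU^{(n)}_1\ge0$; hence $X^{(n)}\ge0$. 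For the cone membership I would use that $\mC^{(n-1)}$ is the intersection of the positive semidefinite cone with a linear subspace (the comb normalization conditions of Eq.~\eqref{eq:Comb}, which become a homogeneous linear system on cone elements); then $nX^{(n-1)}-\ChoiE^{(n-1)}_{n-1}$, being positive semidefinite and a difference of two elements of $\mC^{(n-1)}$, again lies in $\mC^{(n-1)}$. Since moreover $\ChoiU^{(n)}_1\in\tmC^{(n)}$ and $2Y^{(n)}\in\tmC^{(n)}$, each summand above is a tensor product of an element of $\tmC^{(n)}$ with an element of $\mC^{(n-1)}$, hence, up to a positive scalar, the Choi representation of a concatenation of two quantum combs, so an element of $\mC^{(n)}$; therefore $(n+1)X^{(n)}\in\mC^{(n)}$ and thus $X^{(n)}\in\mC^{(n)}$.

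It remains to verify the $n+1$ operator inequalities. For $m<n$, subtracting $\ChoiE^{(n)}_m=\ChoiU^{(n)}_1\ot\ChoiE^{(n-1)}_m$ from Eq.~\eqref{eq:Xn} and regrouping gives
\begin{alignat}{1}
 (n+1)X^{(n)}-\ChoiE^{(n)}_m &= \ChoiU^{(n)}_1\ot\bigl[\,nX^{(n-1)}-\ChoiE^{(n-1)}_m\,\bigr]
 + \bigl[\,2Y^{(n)}-\ChoiU^{(n)}_1\,\bigr]\ot\ChoiE^{(n-1)}_{n-1}\ \ge\ 0, \nonumber
\end{alignat}
the first term being nonnegative by the inductive hypothesis $nX^{(n-1)}\ge\ChoiE^{(n-1)}_m$ and the second by the constraint $2Y^{(n)}\ge\ChoiU^{(n)}_1$. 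For $m=n$, subtracting $\ChoiE^{(n)}_n=\ChoiU^{(n)}_0\ot\ChoiE^{(n-1)}_{n-1}$ and adding and subtracting $\ChoiU^{(n)}_1\ot\ChoiE^{(n-1)}_{n-1}$ yields
\begin{alignat}{1}
 (n+1)X^{(n)}-\ChoiE^{(n)}_n &= \ChoiU^{(n)}_1\ot\bigl[\,nX^{(n-1)}-\ChoiE^{(n-1)}_{n-1}\,\bigr]
 + \bigl[\,2Y^{(n)}-\ChoiU^{(n)}_0\,\bigr]\ot\ChoiE^{(n-1)}_{n-1}\ \ge\ 0, \nonumber
\end{alignat}
now using the $m=n-1$ inductive hypothesis for the first term and the remaining constraint $2Y^{(n)}\ge\ChoiU^{(n)}_0$ of Problem~\eqref{prob:DY} for the second. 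This completes the induction. I expect the only genuine subtlety to be the cone-membership step, namely the observation that a positive semidefinite difference of two comb-cone elements is itself a comb-cone element (which rests on the comb normalization conditions forming a homogeneous linear system once $\eta$ is recognized as a linear functional of the operator); everything else is routine bookkeeping with Eq.~\eqref{eq:tmEN}, Eq.~\eqref{eq:Xn}, and the inductive hypothesis.
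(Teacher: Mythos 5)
Your proof is correct and follows essentially the same route as the paper: induction on $n$, the base case $X^{(1)}=Y^{(1)}$, cone membership via tensor products of elements of $\tmC^{(n)}$ with elements of $\mC^{(n-1)}$, and the two operator inequalities obtained from the inductive hypothesis together with the constraints $Y^{(n)}\ge\ChoiU^{(n)}_0/2$ and $Y^{(n)}\ge\ChoiU^{(n)}_1/2$. The only difference is cosmetic — you regroup $(n+1)X^{(n)}$ as $2Y^{(n)}\ot\ChoiE^{(n-1)}_{n-1}+\ChoiU^{(n)}_1\ot[nX^{(n-1)}-\ChoiE^{(n-1)}_{n-1}]$ where the paper keeps $\ChoiU^{(n)}_1\ot nX^{(n-1)}+[2Y^{(n)}-\ChoiU^{(n)}_1]\ot\ChoiE^{(n-1)}_{n-1}$, and both groupings rest on the same observation that a positive semidefinite difference of two comb-cone elements again lies in the cone (which you, helpfully, make explicit).
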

\begin{proof}
 We prove by induction on $n$ that
 $X^{(n)}$ is a feasible solution to Problem~\eqref{prob:D} with $N = n$.
 The case $n = 1$ is obvious from $X^{(1)} = Y^{(1)}$.
 Suppose now that, for given $n \ge 2$, $X^{(n-1)}$ is a feasible solution to
 Problem~\eqref{prob:D} with $N = n - 1$;
 we have to show $X^{(n)} \in \mC^{(n)}$ and
 $X^{(n)} \ge \ChoiE^{(n)}_m / (n+1)$ $~(\forall m \in \{0,\dots,n\})$.

 First, we show $X^{(n)} \in \mC^{(n)}$.
 We can verify that $Z \ot Z'$ is in $\mC^{(n)}$ for any
 $Z \in \tmC^{(n)}$ and $Z' \in \mC^{(n-1)}$.
 Thus, since $\ChoiU^{(n)}_1, 2Y^{(n)} - \ChoiU^{(n)}_1 \in \tmC^{(n)}$ and
 $n X^{(n-1)}, \ChoiE^{(n-1)}_{n-1} \in \mC^{(n-1)}$ hold,
 we have $X^{(n)} \in \mC^{(n)}$.

 Next, we show $X^{(n)} \ge \ChoiE^{(n)}_m / (n+1)$ $~(\forall m \in \{0,\dots,n\})$.
 In the case of $m < n$, we have that, from Eq.~\eqref{eq:Xn},
 \begin{alignat}{1}
  X^{(n)} &\ge \frac{\ChoiU^{(n)}_1 \ot n X^{(n-1)}}{n+1}
  \ge \frac{\ChoiU^{(n)}_1 \ot \ChoiE^{(n-1)}_m}{n+1} = \frac{\ChoiE^{(n)}_m}{n+1},
 \end{alignat}
 where the first inequality follows from $Y^{(n)} \ge \ChoiU^{(n)}_1 / 2$,
 and the equality follows from $\ChoiU^{(n)}_1 \ot \ChoiE^{(n-1)}_m = \ChoiE^{(n)}_m$,
 which is obtained from Eq.~\eqref{eq:tmEN}.
 In the case of $m = n$, we have that, from Eq.~\eqref{eq:Xn},
 \begin{alignat}{1}
  X^{(n)} &\ge \frac{2 Y^{(n)} \ot \ChoiE^{(n-1)}_{n-1}}{n+1}
  \ge \frac{\ChoiU_0 \ot \ChoiE^{(n-1)}_{n-1}}{n+1} = \frac{\ChoiE^{(n)}_n}{n+1},
 \end{alignat}
 where the first and second inequalities follow from
 $X^{(n-1)} \ge \ChoiE^{(n-1)}_{n-1} / n$ and
 $Y^{(n)} \ge \ChoiU^{(n)}_0 / 2$, respectively.
 The equality follows from $\ChoiU^{(n)}_0 \ot \ChoiE^{(n-1)}_{n-1} = \ChoiE^{(n)}_n$,
 which is obtained from Eq.~\eqref{eq:tmEN}.
\end{proof}

We now prove Theorem~\ref{thm:main} by using Lemmas~\ref{lemma:primal} and \ref{lemma:XN}.
Let $q_n \coloneqq \eta[X^{(n)}]$ and $y_n \coloneqq \eta[Y^{(n)}]$.
We have that, from \eqref{eq:Xn},
\begin{alignat}{1}
 q_n &=
 \begin{dcases}
  y_1, & n = 1, \\
  \frac{n q_{n-1} + 2y_n - 1}{n+1}, & n \ge 2, \\
 \end{dcases}
\end{alignat}
which follows from $\eta(Z \ot Z') = \eta(Z)\eta(Z')$
$~[\forall Z \in \tmC^{(n)}, ~Z' \in \mC^{(n-1)}]$,
$\eta[\ChoiU^{(n)}_1] = 1$, and $\eta[\ChoiE^{(n-1)}_{n-1}] = 1$.
Thus, we have
\begin{alignat}{1}
 q_N &= \frac{1}{N+1} \left( \sum_{k=1}^N 2 y_k - N + 1 \right).
 \label{eq:qN}
\end{alignat}
Substituting $y_n = (\gamma_n + 1) / 2$ into this equation yields
$q_N = (\sum_{k=1}^N \gamma_k + 1) / (N + 1)$.
Since $X^{(N)}$ is a feasible solution to Problem~\eqref{prob:D},
$P^{(N)} \le \eta[X^{(N)}] = q_N$ holds.
On the other hand, from Lemma~\ref{lemma:primal},
there exists a tester whose success probability is $q_N$,
and thus $P^{(N)} \ge q_N$ holds.
Therefore, we have $P^{(N)} = q_N$.

\subsection{Proof of Theorem~\ref{thm:main_hamiltonian}} \label{subsec:thm_hamiltonian}

Let us consider the problem of discriminating change points of Hamiltonians
$H_0(t)$ and $H_1(t)$, which is addressed in the main paper.
Assume, without loss of generality, that the candidates, $t_0,\dots,t_N$, of the change point
can be expressed in the form $t_k = \tau_{kR}$,
where $R$ is a natural number and $\tau_0,\tau_1,\dots,\tau_{NR}$ are real numbers
satisfying $\tau_0 < \tau_1 < \dots < \tau_{NR}$.
Let $\mU^{(k,r)}_b$ be the unitary channel defined by
\begin{alignat}{1}
 \mU^{(k,r)}_b(\rho) &\coloneqq U^{(k,r)}_b \rho U^{(k,r)}_b{}^\dagger,
 \quad U^{(k,r)}_b \coloneqq \mT \left[ \exp \left[ - \i \int_{\tau_{(k-1)R + r - 1}}^{\tau_{(k-1)R + r}}
 H_b(t) dt \right] \right],
 \quad b \in \{0,1\}, ~k \in \{1,\dots,N\}, ~r \in \{1,\dots,R\},
\end{alignat}
where $\mT$ is the time order operator.
Also, $\mU^{(k)}_b$ be defined by Eq.~\eqref{eq:Ukb}.
This problem can be regarded as the problem of discriminating quantum combs
$\mE^{(N)}_0,\dots,\mE^{(N)}_N$ expressed by Eq.~\eqref{eq:E}.

From Lemma~\ref{lemma:XN} and Theorem~\ref{thm:main},
$X^{(N)}$ of Eq.~\eqref{eq:Xn} is an optimal solution to Problem~\eqref{prob:D}.
$X^{(N)}$ is obtained from the optimal solutions $Y^{(1)},\dots,Y^{(N)}$ of Problem~\eqref{prob:DY}.
From Eq.~\eqref{eq:qN}, for each $k$, the greater $y_k = \eta[Y^{(k)}]$
[i.e., the success probability of distinguishing $\ChoiU^{(k)}_0$ and $\ChoiU^{(k)}_1$],
the greater $q_N = \eta[X^{(N)}]$.
$y_k$ approaches a maximal value of $(\gamma_k + 1) / 2$ in the limit $R \to \infty$,
where
\begin{alignat}{1}
 \gamma_k &\coloneqq \Delta \left[
 \int_{t_{k-1}}^{t_k} [\mu_\mathrm{max}(t) - \mu_\mathrm{min}(t)] dt \right],
\end{alignat}
$\Delta(\theta) \coloneqq \sin[\min(\theta,\pi)/2]$,
and  $\mu_\mathrm{max}(t)$ and $\mu_\mathrm{min}(t)$ are, respectively,
the maximum and minimum eigenvalues of $H_1(t) - H_0(t)$ \cite{Chi-Pre-Ren-1999,Aha-Mas-Pop-2002}.
In this limit, we have that, from Eq.~\eqref{eq:qN},
\begin{alignat}{1}
 q_N &= \frac{1}{N + 1} \left( \sum_{k=1}^N \gamma_k + 1 \right).
\end{alignat}
Moreover, it follows from Lemma~\eqref{lemma:primal} in the limit $R \to \infty$ that
there exists a tester whose success probability is $q_N$.
Thus, the maximum success probability is $(\sum_{k=1}^N \gamma_k + 1) / (N + 1)$.

The conditional probability that the outcome of such an optimal tester is $t_m$ given that
the change point is $t_n$ is
\begin{alignat}{1}
 \pH_{m|n} &\coloneqq \frac{\gamma_m + \gamma_{m+1}}{2} \prod_{l=\min(m,n)+1}^{\max(m,n)} \zeta_l,
 \label{eq:pH2}
\end{alignat}
which is of the same form as Eq.~\eqref{eq:pmn}.
Note that if, as in Ref.~\cite{Aha-Mas-Pop-2002}, the Hilbert space $\mH_{Nbox}$
corresponding to ``the particle passing next to the box'' can be used,
then we can consider a process $\mU^{(k)}_b \oplus \ident_{\mH_{Nbox}}$,
instead of $\mU^{(k)}_b$.

An optimal tester is expressed as Eq.~\eqref{eq:process_discrimination_primal} in the limit
$R \to \infty$.
Such a tester consists of the following three steps:
\begin{enumerate}[label=\arabic*)]
 \item Prepare the composite system $\V_{N,1} \ot \cdots \ot \V_{1,1}$
       in the pure state $\ket{\psi}$ of Eq.~\eqref{eq:psi0}.
 \item Repeat the following procedure for each $k = 1,\dots,N$:
       during the time interval from $t_{k-1}$ to $t_k$,
       let the Hamiltonian $H_\bin{n < k}(t) - H_0(t)$ act on the system $\V_{k,1}$,
       where $H_\bin{n < k}(t)$ is the given Hamiltonian and $-H_0(t)$ is a control Hamiltonian.
       Simultaneously, apply a unitary transformation that maps the eigenstates
       of $H_1(t - 0) - H_0(t - 0)$ to the corresponding eigenstates of $H_1(t + 0) - H_0(t + 0)$.
 \item Perform a measurement $\{ \Pi_m \}_{m=0}^N$ on the output state at time $t_N$,
       where $\Pi_m \coloneqq \ket{\pi_m} \bra{\pi_m}$ $~(\forall m \in \{0,\dots,N-1\})$,
       $\Pi_N \coloneqq \ident - \sum_{m=0}^{N-1} \Pi_m$,
       and $\{ \ket{\pi_m} \}_{m=0}^N$ is an orthonormal system satisfying Eq.~\eqref{eq:pmn}.
\end{enumerate}

\section{Relationship between inner products of states and maximum success probability}
\label{sec:iprod}

As a simple example, we consider the problem of discriminating three pure states,
denoted by $\ket{\psi_0}$, $\ket{\psi_1}$, and $\ket{\psi_2}$, with equal prior probabilities.
Assume $\braket{\psi_1|\psi_0} = \braket{\psi_2|\psi_1} = 0.8$ and
$\braket{\psi_2|\psi_0} = a$ with some nonnegative real number $a$; then,
the Gram matrix of these states is
\begin{alignat}{1}
 \begin{bmatrix}
  1 & 0.8 & a \\
  0.8 & 1 & 0.8 \\
  a & 0.8 & 1 \\
 \end{bmatrix}.
\end{alignat}
Since the Gram matrix is positive semidefinite, $0.28 \le a \le 1$ must hold.
The maximum success probability, denoted by $P$, is obtained from the Gram matrix.
The smaller $a$ is, the greater is the maximum success probability for discriminating 
$\ket{\psi_0}$ and $\ket{\psi_2}$.
So one might expect that $P$ reaches its maximum value for the smallest $a$.
However, this is not true as we can see from Fig.~\ref{fig:result_Gram3}.
\begin{figure}[h]
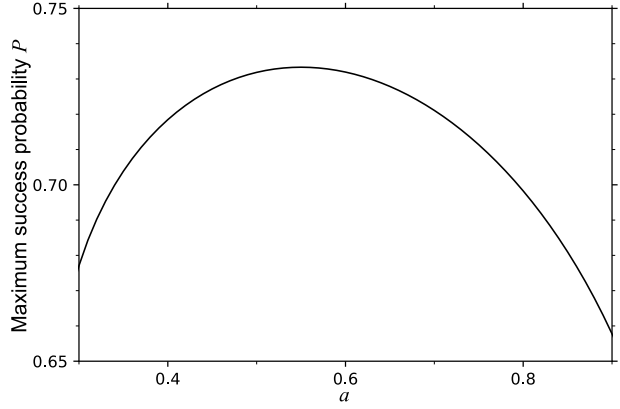

 \centering
 \InsertPDF{0.6}{result-Gram3.pdf}
 \caption{Maximum success probability, $P$, versus $a$.}
 \label{fig:result_Gram3}
\end{figure}

\section{Classical postprocessing} \label{sec:range}

In the problem of discriminating change points of Hamiltonians,
the conditional probability, $\pH_{m|n}$, of guessing the change point $t_m$ given that
it is actually $t_n$ is given by Eq.~\eqref{eq:pH2}.
Suppose here that after the discrimination, we got the information that
the change point $t_n$ is not smaller than $t_{n_0}$ and not greater than $t_{n_1}$,
(i.e., $t_{n_0} \le t_n \le t_{n_1}$), where $0 \le n_0 < n_1 \le N$ holds.
Also, suppose that the new prior probabilities of the possible change points
$t_{n_0},\dots,t_{n_1}$ are all equal.
Let $t_m$ be the discrimination result.
In such a case, we can still achieve optimal discrimination when
%
\begin{alignat}{1}
 t'_m &\coloneqq
 \begin{dcases}
  t_m, & n_0 \le m \le n_1, \\
  t_{n_0}, & m < n_0, \\
  t_{n_1}, & m > n_1
 \end{dcases}
\end{alignat}
is simply used as the final estimate.
Indeed, from $(1 + \gamma_k) \zeta_k = 1 - \gamma_k$,
the conditional probability, denoted by $p'_{m|n}$, of obtaining the final estimate $t'_m$ given that
it is actually $t_n$ satisfies
\begin{alignat}{1}
 p'_{n_0|n} &= \sum_{k=0}^{n_0} \pH_{k|n}
 = \frac{1 + \gamma_1}{2} \prod_{l=1}^n \zeta_l
 + \sum_{k=1}^{n_0} \frac{\gamma_k + \gamma_{k+1}}{2} \prod_{l=k+1}^n \zeta_l
 = \frac{1 + \gamma_2}{2} \prod_{l=2}^n \zeta_l
 + \sum_{k=2}^{n_0} \frac{\gamma_k + \gamma_{k+1}}{2} \prod_{l=k+1}^n \zeta_l
 = \dots = \frac{1 + \gamma_{n_0+1}}{2} \prod_{l=n_0+1}^n \zeta_l, \nonumber \\
 p'_{n_1|n} &= \sum_{k=n_1}^N \pH_{k|n}
 = \frac{\gamma_N + 1}{2} \prod_{l=n+1}^N \zeta_l
 + \sum_{k=n_1}^{N-1} \frac{\gamma_k + \gamma_{k+1}}{2} \prod_{l=n+1}^k \zeta_l
 = \frac{\gamma_{N-1} + 1}{2} \prod_{l=n+1}^{N-1} \zeta_l
 + \sum_{k=n_1}^{N-2} \frac{\gamma_k + \gamma_{k+1}}{2} \prod_{l=n+1}^k \zeta_l
 = \dots = \frac{\gamma_{n_1} + 1}{2} \prod_{l=n+1}^{n_1} \zeta_l.
\end{alignat}
This indicates that $p'_{m|n}$ is equal to $\pH_{m|n}$ with $\gamma_{n_0} = \gamma_{n_1 + 1} = 1$.
The success probability is
\begin{alignat}{1}
 \frac{1}{n_1 - n_0 + 1} \sum_{m=n_0}^{n_1} p'_{m|n}
 &= \frac{1}{n_1 - n_0 + 1} \left( \sum_{k=n_0 + 1}^{n_1} \gamma_k + 1 \right),
\end{alignat}
and thus this strategy is optimal.

\section{Maximum likelihood estimator for the change point} \label{sec:MLE}

In the problem discussed in Sec.~\ref{sec:problem},
we here consider the case $1 > \gamma_1 = \gamma_2 = \dots = \gamma_N \eqqcolon \gamma$.
Assume for simplicity that the change point runs over all integers.
Let $\zeta \coloneqq (1-\gamma) / (1+\gamma)$.
For the optimal discrimination mentioned in Sec.~\ref{sec:thm},
it follows from Eq.~\eqref{eq:pmn} that
the conditional probability, $p_{m|n}$, that the measurement result is $m$ given that
the change point is $n$ is
\begin{alignat}{1}
 p_{m|n} &= \mL(m - n; \zeta) = \frac{1-\zeta}{1+\zeta} \zeta^{|m-n|}.
\end{alignat}
Thus, for each $n$, $\{ p_{m|n} = \mL(m-n; \zeta) \}_m$ has the discrete Laplace distribution.
Let us repeat the same experiment $L \ge 2$ times using the same optimal discrimination strategy
and assume that $L$ discrimination results, denoted by $m_1,\dots,m_L$, are obtained.
We want to determine the change point $n$ as accurately as possible from $m_1,\dots,m_L$.
This can be seen as the classical problem of finding the maximum likelihood estimate of $n$.
The probability of obtaining the results $m_1,\dots,m_L$ is given as
\begin{alignat}{1}
 \mathrm{Prob}(m_1,\dots,m_L;n) &\coloneqq \prod_{l=1}^L \mL(m_l - n; \zeta)
 = \left( \frac{1-\zeta}{1+\zeta} \right)^L \zeta^{\sum_{l=1}^L |m_l-n|}.
\end{alignat}
The maximum likelihood estimate of $n$ is given by
\begin{alignat}{1}
 \hat{n} &\in \argmax_{n} \mathrm{Prob}(m_1,\dots,m_L;n)
 = \argmin_{n} \sum_{l=1}^L |m_l-n|,
\end{alignat}
where the equality follows from $0 < \zeta < 1$.
Note that we assume here that the prior probabilities of the change points are all equal,
in which case the maximum likelihood estimate is equivalent to the maximum a posteriori estimate.

Let $m'_1,\dots,m'_L$ be the discrimination results sorted in ascending order.
It follows that for any natural number $l$ not greater than $L/2$,
$|m'_l - m| + |m'_{L+1-l} - m|$ attains its minimum value of $m'_{L+1-l} - m'_l$
when $m'_l \le m \le m'_{L+1-l}$ holds.
Thus, the maximum likelihood estimate $\hat{n}$ is $m'_{(L+1)/2}$
(i.e., the median of $m_1,\dots,m_L$) if $L$ is odd and
any integer $m$ satisfying $m'_{L/2} \le m \le m'_{L/2+1}$ if $L$ is even.

\Letter{}{
 \bibliographystyle{apsrev4-1}
%


\begin{thebibliography}{33}%
\makeatletter
\providecommand \@ifxundefined [1]{%
 \@ifx{#1\undefined}
}%
\providecommand \@ifnum [1]{%
 \ifnum #1\expandafter \@firstoftwo
 \else \expandafter \@secondoftwo
 \fi
}%
\providecommand \@ifx [1]{%
 \ifx #1\expandafter \@firstoftwo
 \else \expandafter \@secondoftwo
 \fi
}%
\providecommand \natexlab [1]{#1}%
\providecommand \enquote  [1]{``#1''}%
\providecommand \bibnamefont  [1]{#1}%
\providecommand \bibfnamefont [1]{#1}%
\providecommand \citenamefont [1]{#1}%
\providecommand \href@noop [0]{\@secondoftwo}%
\providecommand \href [0]{\begingroup \@sanitize@url \@href}%
\providecommand \@href[1]{\@@startlink{#1}\@@href}%
\providecommand \@@href[1]{\endgroup#1\@@endlink}%
\providecommand \@sanitize@url [0]{\catcode `\\12\catcode `\$12\catcode
  `\&12\catcode `\#12\catcode `\^12\catcode `\_12\catcode `\%12\relax}%
\providecommand \@@startlink[1]{}%
\providecommand \@@endlink[0]{}%
\providecommand \url  [0]{\begingroup\@sanitize@url \@url }%
\providecommand \@url [1]{\endgroup\@href {#1}{\urlprefix }}%
\providecommand \urlprefix  [0]{URL }%
\providecommand \Eprint [0]{\href }%
\providecommand \doibase [0]{http://dx.doi.org/}%
\providecommand \selectlanguage [0]{\@gobble}%
\providecommand \bibinfo  [0]{\@secondoftwo}%
\providecommand \bibfield  [0]{\@secondoftwo}%
\providecommand \translation [1]{[#1]}%
\providecommand \BibitemOpen [0]{}%
\providecommand \bibitemStop [0]{}%
\providecommand \bibitemNoStop [0]{.\EOS\space}%
\providecommand \EOS [0]{\spacefactor3000\relax}%
\providecommand \BibitemShut  [1]{\csname bibitem#1\endcsname}%
\let\auto@bib@innerbib\@empty
\bibitem [{\citenamefont {Page}(1954)}]{Page-1954}%
  \BibitemOpen
  \bibfield  {author} {\bibinfo {author} {\bibfnamefont {E.~S.}\ \bibnamefont
  {Page}},\ }\href {https://doi.org/10.2307/2333009} {\bibfield  {journal}
  {\bibinfo  {journal} {Biometrika}\ }\textbf {\bibinfo {volume} {41}},\
  \bibinfo {pages} {100} (\bibinfo {year} {1954})}\BibitemShut {NoStop}%
\bibitem [{\citenamefont {Page}(1955)}]{Page-1955}%
  \BibitemOpen
  \bibfield  {author} {\bibinfo {author} {\bibfnamefont {E.~S.}\ \bibnamefont
  {Page}},\ }\href {https://doi.org/10.1093/biomet/42.3-4.523} {\bibfield
  {journal} {\bibinfo  {journal} {Biometrika}\ }\textbf {\bibinfo {volume}
  {42}},\ \bibinfo {pages} {523} (\bibinfo {year} {1955})}\BibitemShut
  {NoStop}%
\bibitem [{\citenamefont {Brodsky}\ and\ \citenamefont
  {Darkhovsky}(1993)}]{Bro-Dar-1993}%
  \BibitemOpen
  \bibfield  {author} {\bibinfo {author} {\bibfnamefont {E.}~\bibnamefont
  {Brodsky}}\ and\ \bibinfo {author} {\bibfnamefont {B.~S.}\ \bibnamefont
  {Darkhovsky}},\ }\href {https://doi.org/10.1007/978-94-015-8163-9} {\emph
  {\bibinfo {title} {Nonparametric methods in change point problems}}}\
  (\bibinfo  {publisher} {Springer Science \& Business Media},\ \bibinfo
  {address} {Dordrecht},\ \bibinfo {year} {1993})\BibitemShut {NoStop}%
\bibitem [{\citenamefont {Carlstein}\ \emph {et~al.}(1994)\citenamefont
  {Carlstein}, \citenamefont {M{\"u}ller},\ and\ \citenamefont
  {Siegmund}}]{Car-Mul-Sie-1994}%
  \BibitemOpen
  \bibfield  {author} {\bibinfo {author} {\bibfnamefont {E.}~\bibnamefont
  {Carlstein}}, \bibinfo {author} {\bibfnamefont {H.}~\bibnamefont
  {M{\"u}ller}}, \ and\ \bibinfo {author} {\bibfnamefont {D.}~\bibnamefont
  {Siegmund}},\ }\href@noop {} {\emph {\bibinfo {title} {Change-point
  Problems}}},\ \bibinfo {series} {Institute of Mathematical Statistics Lecture
  Notes―Monograph Series}\ No.~\bibinfo {number} {23}\ (\bibinfo  {publisher}
  {Institute of Mathematical Statistics},\ \bibinfo {address} {Hayward, CA},\
  \bibinfo {year} {1994})\BibitemShut {NoStop}%
\bibitem [{\citenamefont {Sentis}\ \emph {et~al.}(2016)\citenamefont {Sentis},
  \citenamefont {Bagan}, \citenamefont {Calsamiglia}, \citenamefont
  {Chiribella},\ and\ \citenamefont {Mu{\~n}oz-Tapia}}]{Sen-Bag-Cal-Chi-2016}%
  \BibitemOpen
  \bibfield  {author} {\bibinfo {author} {\bibfnamefont {G.}~\bibnamefont
  {Sentis}}, \bibinfo {author} {\bibfnamefont {E.}~\bibnamefont {Bagan}},
  \bibinfo {author} {\bibfnamefont {J.}~\bibnamefont {Calsamiglia}}, \bibinfo
  {author} {\bibfnamefont {G.}~\bibnamefont {Chiribella}}, \ and\ \bibinfo
  {author} {\bibfnamefont {R.}~\bibnamefont {Mu{\~n}oz-Tapia}},\ }\href
  {https://doi.org/10.1103/PhysRevLett.117.150502} {\bibfield  {journal}
  {\bibinfo  {journal} {Phys. Rev. Lett.}\ }\textbf {\bibinfo {volume} {117}},\
  \bibinfo {pages} {150502} (\bibinfo {year} {2016})}\BibitemShut {NoStop}%
\bibitem [{\citenamefont {Sacchi}(2005{\natexlab{a}})}]{Sac-2005}%
  \BibitemOpen
  \bibfield  {author} {\bibinfo {author} {\bibfnamefont {M.~F.}\ \bibnamefont
  {Sacchi}},\ }\href {https://doi.org/10.1103/PhysRevA.71.062340} {\bibfield
  {journal} {\bibinfo  {journal} {Phys. Rev. A}\ }\textbf {\bibinfo {volume}
  {71}},\ \bibinfo {pages} {062340} (\bibinfo {year}
  {2005}{\natexlab{a}})}\BibitemShut {NoStop}%
\bibitem [{\citenamefont {Sacchi}(2005{\natexlab{b}})}]{Sac-2005-EB}%
  \BibitemOpen
  \bibfield  {author} {\bibinfo {author} {\bibfnamefont {M.~F.}\ \bibnamefont
  {Sacchi}},\ }\href {https://doi.org/10.1103/PhysRevA.72.014305} {\bibfield
  {journal} {\bibinfo  {journal} {Phys. Rev. A}\ }\textbf {\bibinfo {volume}
  {72}},\ \bibinfo {pages} {014305} (\bibinfo {year}
  {2005}{\natexlab{b}})}\BibitemShut {NoStop}%
\bibitem [{\citenamefont {Duan}\ \emph {et~al.}(2007)\citenamefont {Duan},
  \citenamefont {Feng},\ and\ \citenamefont {Ying}}]{Dua-Fen-Yin-2007}%
  \BibitemOpen
  \bibfield  {author} {\bibinfo {author} {\bibfnamefont {R.}~\bibnamefont
  {Duan}}, \bibinfo {author} {\bibfnamefont {Y.}~\bibnamefont {Feng}}, \ and\
  \bibinfo {author} {\bibfnamefont {M.}~\bibnamefont {Ying}},\ }\href
  {https://doi.org/10.1103/PhysRevLett.98.100503} {\bibfield  {journal}
  {\bibinfo  {journal} {Phys. Rev. Lett.}\ }\textbf {\bibinfo {volume} {98}},\
  \bibinfo {pages} {100503} (\bibinfo {year} {2007})}\BibitemShut {NoStop}%
\bibitem [{\citenamefont {Li}\ and\ \citenamefont {Qiu}(2008)}]{Li-Qiu-2008}%
  \BibitemOpen
  \bibfield  {author} {\bibinfo {author} {\bibfnamefont {L.}~\bibnamefont
  {Li}}\ and\ \bibinfo {author} {\bibfnamefont {D.}~\bibnamefont {Qiu}},\
  }\href {https://doi.org/10.1088/1751-8113/41/33/335302} {\bibfield  {journal}
  {\bibinfo  {journal} {J. Phys. A: Math. Theor.}\ }\textbf {\bibinfo {volume}
  {41}},\ \bibinfo {pages} {335302} (\bibinfo {year} {2008})}\BibitemShut
  {NoStop}%
\bibitem [{\citenamefont {Harrow}\ \emph {et~al.}(2010)\citenamefont {Harrow},
  \citenamefont {Hassidim}, \citenamefont {Leung},\ and\ \citenamefont
  {Watrous}}]{Har-Has-Leu-Wat-2010}%
  \BibitemOpen
  \bibfield  {author} {\bibinfo {author} {\bibfnamefont {A.~W.}\ \bibnamefont
  {Harrow}}, \bibinfo {author} {\bibfnamefont {A.}~\bibnamefont {Hassidim}},
  \bibinfo {author} {\bibfnamefont {D.~W.}\ \bibnamefont {Leung}}, \ and\
  \bibinfo {author} {\bibfnamefont {J.}~\bibnamefont {Watrous}},\ }\href
  {https://doi.org/10.1103/PhysRevA.81.032339} {\bibfield  {journal} {\bibinfo
  {journal} {Phys. Rev. A}\ }\textbf {\bibinfo {volume} {81}},\ \bibinfo
  {pages} {032339} (\bibinfo {year} {2010})}\BibitemShut {NoStop}%
\bibitem [{\citenamefont {Matthews}\ \emph {et~al.}(2010)\citenamefont
  {Matthews}, \citenamefont {Piani},\ and\ \citenamefont
  {Watrous}}]{Mat-Pia-Wat-2010}%
  \BibitemOpen
  \bibfield  {author} {\bibinfo {author} {\bibfnamefont {W.}~\bibnamefont
  {Matthews}}, \bibinfo {author} {\bibfnamefont {M.}~\bibnamefont {Piani}}, \
  and\ \bibinfo {author} {\bibfnamefont {J.}~\bibnamefont {Watrous}},\ }\href
  {https://doi.org/10.1103/PhysRevA.82.032302} {\bibfield  {journal} {\bibinfo
  {journal} {Phys. Rev. A}\ }\textbf {\bibinfo {volume} {82}},\ \bibinfo
  {pages} {032302} (\bibinfo {year} {2010})}\BibitemShut {NoStop}%
\bibitem [{\citenamefont {Duan}\ \emph {et~al.}(2016)\citenamefont {Duan},
  \citenamefont {Guo}, \citenamefont {Li},\ and\ \citenamefont
  {Li}}]{Dua-Guo-Li-Li-2016}%
  \BibitemOpen
  \bibfield  {author} {\bibinfo {author} {\bibfnamefont {R.}~\bibnamefont
  {Duan}}, \bibinfo {author} {\bibfnamefont {C.}~\bibnamefont {Guo}}, \bibinfo
  {author} {\bibfnamefont {C.-K.}\ \bibnamefont {Li}}, \ and\ \bibinfo {author}
  {\bibfnamefont {Y.}~\bibnamefont {Li}},\ }in\ \href
  {https://doi.org/10.1109/ISIT.2016.7541701} {\emph {\bibinfo {booktitle}
  {Proc. IEEE Int. Symp. Inf. Theory (ISIT)}}}\ (\bibinfo {organization}
  {IEEE},\ \bibinfo {year} {2016})\ pp.\ \bibinfo {pages}
  {2259--2263}\BibitemShut {NoStop}%
\bibitem [{\citenamefont {Pirandola}\ and\ \citenamefont
  {Lupo}(2017)}]{Pir-Lup-2017}%
  \BibitemOpen
  \bibfield  {author} {\bibinfo {author} {\bibfnamefont {S.}~\bibnamefont
  {Pirandola}}\ and\ \bibinfo {author} {\bibfnamefont {C.}~\bibnamefont
  {Lupo}},\ }\href {https://doi.org/10.1103/PhysRevLett.118.100502} {\bibfield
  {journal} {\bibinfo  {journal} {Phys. Rev. Lett.}\ }\textbf {\bibinfo
  {volume} {118}},\ \bibinfo {pages} {100502} (\bibinfo {year}
  {2017})}\BibitemShut {NoStop}%
\bibitem [{\citenamefont {Li}\ \emph {et~al.}(2017)\citenamefont {Li},
  \citenamefont {Zheng}, \citenamefont {Situ},\ and\ \citenamefont
  {Qiu}}]{Li-Zhe-Sit-Qiu-2017}%
  \BibitemOpen
  \bibfield  {author} {\bibinfo {author} {\bibfnamefont {L.}~\bibnamefont
  {Li}}, \bibinfo {author} {\bibfnamefont {S.}~\bibnamefont {Zheng}}, \bibinfo
  {author} {\bibfnamefont {H.}~\bibnamefont {Situ}}, \ and\ \bibinfo {author}
  {\bibfnamefont {D.}~\bibnamefont {Qiu}},\ }\href
  {https://doi.org/10.1103/PhysRevA.96.052327} {\bibfield  {journal} {\bibinfo
  {journal} {Phys. Rev. A}\ }\textbf {\bibinfo {volume} {96}},\ \bibinfo
  {pages} {052327} (\bibinfo {year} {2017})}\BibitemShut {NoStop}%
\bibitem [{\citenamefont {Pirandola}\ \emph {et~al.}(2019)\citenamefont
  {Pirandola}, \citenamefont {Laurenza}, \citenamefont {Lupo},\ and\
  \citenamefont {Pereira}}]{Pir-Lau-Lup-Per-2019}%
  \BibitemOpen
  \bibfield  {author} {\bibinfo {author} {\bibfnamefont {S.}~\bibnamefont
  {Pirandola}}, \bibinfo {author} {\bibfnamefont {R.}~\bibnamefont {Laurenza}},
  \bibinfo {author} {\bibfnamefont {C.}~\bibnamefont {Lupo}}, \ and\ \bibinfo
  {author} {\bibfnamefont {J.~L.}\ \bibnamefont {Pereira}},\ }\href
  {https://doi.org/10.1038/s41534-019-0162-y} {\bibfield  {journal} {\bibinfo
  {journal} {npj Quantum Inf.}\ }\textbf {\bibinfo {volume} {5}},\ \bibinfo
  {pages} {50} (\bibinfo {year} {2019})}\BibitemShut {NoStop}%
\bibitem [{\citenamefont {Katariya}\ and\ \citenamefont
  {Wilde}(2020)}]{Kat-Wil-2020}%
  \BibitemOpen
  \bibfield  {author} {\bibinfo {author} {\bibfnamefont {V.}~\bibnamefont
  {Katariya}}\ and\ \bibinfo {author} {\bibfnamefont {M.~M.}\ \bibnamefont
  {Wilde}},\ }\href {https://arxiv.org/abs/2001.05376} {\bibfield  {journal}
  {\bibinfo  {journal} {arXiv preprint arXiv:2001.05376}\ } (\bibinfo {year}
  {2020})}\BibitemShut {NoStop}%
\bibitem [{\citenamefont {Gutoski}\ and\ \citenamefont
  {Watrous}(2007)}]{Gut-Wat-2007}%
  \BibitemOpen
  \bibfield  {author} {\bibinfo {author} {\bibfnamefont {G.}~\bibnamefont
  {Gutoski}}\ and\ \bibinfo {author} {\bibfnamefont {J.}~\bibnamefont
  {Watrous}},\ }in\ \href {https://doi.org/10.1145/1250790.1250873} {\emph
  {\bibinfo {booktitle} {Proceedings of the 39th annual ACM symposium on Theory
  of computing}}}\ (\bibinfo {year} {2007})\ pp.\ \bibinfo {pages}
  {565--574}\BibitemShut {NoStop}%
\bibitem [{\citenamefont {Chiribella}\ \emph
  {et~al.}(2008{\natexlab{a}})\citenamefont {Chiribella}, \citenamefont
  {D'{A}riano},\ and\ \citenamefont {Perinotti}}]{Chi-Dar-Per-2008-memory}%
  \BibitemOpen
  \bibfield  {author} {\bibinfo {author} {\bibfnamefont {G.}~\bibnamefont
  {Chiribella}}, \bibinfo {author} {\bibfnamefont {G.~M.}\ \bibnamefont
  {D'{A}riano}}, \ and\ \bibinfo {author} {\bibfnamefont {P.}~\bibnamefont
  {Perinotti}},\ }\href {https://doi.org/10.1103/PhysRevLett.101.180501}
  {\bibfield  {journal} {\bibinfo  {journal} {Phys. Rev. Lett.}\ }\textbf
  {\bibinfo {volume} {101}},\ \bibinfo {pages} {180501} (\bibinfo {year}
  {2008}{\natexlab{a}})}\BibitemShut {NoStop}%
\bibitem [{\citenamefont {Gutoski}(2012)}]{Gut-2012}%
  \BibitemOpen
  \bibfield  {author} {\bibinfo {author} {\bibfnamefont {G.}~\bibnamefont
  {Gutoski}},\ }\href {https://doi.org/10.1063/1.3693621} {\bibfield  {journal}
  {\bibinfo  {journal} {J. Math. Phys.}\ }\textbf {\bibinfo {volume} {53}}
  (\bibinfo {year} {2012})}\BibitemShut {NoStop}%
\bibitem [{\citenamefont {Chiribella}(2012)}]{Chi-2012}%
  \BibitemOpen
  \bibfield  {author} {\bibinfo {author} {\bibfnamefont {G.}~\bibnamefont
  {Chiribella}},\ }\href {https://doi.org/10.1088/1367-2630/14/12/125008}
  {\bibfield  {journal} {\bibinfo  {journal} {New J. Phys.}\ }\textbf {\bibinfo
  {volume} {14}},\ \bibinfo {pages} {125008} (\bibinfo {year}
  {2012})}\BibitemShut {NoStop}%
\bibitem [{\citenamefont {Nakahira}\ and\ \citenamefont
  {Kato}(2021{\natexlab{a}})}]{Nak-Kat-2021}%
  \BibitemOpen
  \bibfield  {author} {\bibinfo {author} {\bibfnamefont {K.}~\bibnamefont
  {Nakahira}}\ and\ \bibinfo {author} {\bibfnamefont {K.}~\bibnamefont
  {Kato}},\ }\href {https://doi.org/10.1103/PhysRevLett.126.200502} {\bibfield
  {journal} {\bibinfo  {journal} {Phys. Rev. Lett.}\ }\textbf {\bibinfo
  {volume} {126}},\ \bibinfo {pages} {200502} (\bibinfo {year}
  {2021}{\natexlab{a}})}\BibitemShut {NoStop}%
\bibitem [{\citenamefont {Nakahira}\ and\ \citenamefont
  {Kato}(2021{\natexlab{b}})}]{Nak-Kat-2021-general}%
  \BibitemOpen
  \bibfield  {author} {\bibinfo {author} {\bibfnamefont {K.}~\bibnamefont
  {Nakahira}}\ and\ \bibinfo {author} {\bibfnamefont {K.}~\bibnamefont
  {Kato}},\ }\href {https://doi.org/10.1103/PhysRevA.103.062606} {\bibfield
  {journal} {\bibinfo  {journal} {Phys. Rev. A}\ }\textbf {\bibinfo {volume}
  {103}},\ \bibinfo {pages} {062606} (\bibinfo {year}
  {2021}{\natexlab{b}})}\BibitemShut {NoStop}%
\bibitem [{\citenamefont {Nakahira}(2021)}]{Nak-2021-restricted}%
  \BibitemOpen
  \bibfield  {author} {\bibinfo {author} {\bibfnamefont {K.}~\bibnamefont
  {Nakahira}},\ }\href {https://doi.org/10.1103/PhysRevA.104.062609} {\bibfield
   {journal} {\bibinfo  {journal} {Phys. Rev. A}\ }\textbf {\bibinfo {volume}
  {104}},\ \bibinfo {pages} {062609} (\bibinfo {year} {2021})}\BibitemShut
  {NoStop}%
\bibitem [{\citenamefont {Bavaresco}\ \emph {et~al.}(2021)\citenamefont
  {Bavaresco}, \citenamefont {Murao},\ and\ \citenamefont
  {Quintino}}]{Bav-Mur-Qui-2021}%
  \BibitemOpen
  \bibfield  {author} {\bibinfo {author} {\bibfnamefont {J.}~\bibnamefont
  {Bavaresco}}, \bibinfo {author} {\bibfnamefont {M.}~\bibnamefont {Murao}}, \
  and\ \bibinfo {author} {\bibfnamefont {M.~T.}\ \bibnamefont {Quintino}},\
  }\href {https://arxiv.org/abs/2105.13369} {\bibfield  {journal} {\bibinfo
  {journal} {arXiv preprint arXiv:2105.13369}\ } (\bibinfo {year}
  {2021})}\BibitemShut {NoStop}%
\bibitem [{\citenamefont {Childs}\ \emph {et~al.}(2000)\citenamefont {Childs},
  \citenamefont {Preskill},\ and\ \citenamefont {Renes}}]{Chi-Pre-Ren-1999}%
  \BibitemOpen
  \bibfield  {author} {\bibinfo {author} {\bibfnamefont {A.~M.}\ \bibnamefont
  {Childs}}, \bibinfo {author} {\bibfnamefont {J.}~\bibnamefont {Preskill}}, \
  and\ \bibinfo {author} {\bibfnamefont {J.}~\bibnamefont {Renes}},\ }\href
  {https://doi.org/10.1080/09500340008244034} {\bibfield  {journal} {\bibinfo
  {journal} {J. Mod. Opt.}\ }\textbf {\bibinfo {volume} {47}},\ \bibinfo
  {pages} {155} (\bibinfo {year} {2000})}\BibitemShut {NoStop}%
\bibitem [{\citenamefont {D'{A}riano}\ \emph {et~al.}(2001)\citenamefont
  {D'{A}riano}, \citenamefont {Presti},\ and\ \citenamefont
  {Paris}}]{Dar-Pre-Par-2001}%
  \BibitemOpen
  \bibfield  {author} {\bibinfo {author} {\bibfnamefont {G.~M.}\ \bibnamefont
  {D'{A}riano}}, \bibinfo {author} {\bibfnamefont {P.~L.}\ \bibnamefont
  {Presti}}, \ and\ \bibinfo {author} {\bibfnamefont {M.~G.}\ \bibnamefont
  {Paris}},\ }\href {https://doi.org/10.1103/PhysRevLett.87.270404} {\bibfield
  {journal} {\bibinfo  {journal} {Phys. Rev. Lett.}\ }\textbf {\bibinfo
  {volume} {87}},\ \bibinfo {pages} {270404} (\bibinfo {year}
  {2001})}\BibitemShut {NoStop}%
\bibitem [{Note1()}]{Note1}%
  \BibitemOpen
  \bibinfo {note} {Let $\mathinner {|{\lambda _k}\protect \rangle }$ be the
  normalized eigenvector corresponding to the eigenvalue $\lambda _k$ of
  $U_0^\dagger U_1$; then, we have $\mathinner {\protect \langle {\psi
  |U_0^\dagger U_1|\psi }\protect \rangle } = \DOTSB \sum@ \slimits@ _k
  |\mathinner {\protect \langle {\lambda _k|\psi }\protect \rangle }|^2 \lambda
  _k$. Since $\Gamma $ can be represented by $\protect \{ \DOTSB \sum@
  \slimits@ _k w_k \lambda _k : w_k \ge 0, ~\DOTSB \sum@ \slimits@ _k w_k = 1
  \protect \}$, the minimum value of $|\mathinner {\protect \langle {\psi
  |U_0^\dagger U_1|\psi }\protect \rangle }|$ is equal to the distance between
  $\Gamma $ and the origin. This fact was noticed in footnote~[22] of
  Ref.~\cite {Dar-Pre-Par-2001}.}\BibitemShut {Stop}%
\bibitem [{\citenamefont {Aharonov}\ \emph {et~al.}(2002)\citenamefont
  {Aharonov}, \citenamefont {Massar},\ and\ \citenamefont
  {Popescu}}]{Aha-Mas-Pop-2002}%
  \BibitemOpen
  \bibfield  {author} {\bibinfo {author} {\bibfnamefont {Y.}~\bibnamefont
  {Aharonov}}, \bibinfo {author} {\bibfnamefont {S.}~\bibnamefont {Massar}}, \
  and\ \bibinfo {author} {\bibfnamefont {S.}~\bibnamefont {Popescu}},\ }\href
  {https://doi.org/10.1103/PhysRevA.66.052107} {\bibfield  {journal} {\bibinfo
  {journal} {Phys. Rev. A}\ }\textbf {\bibinfo {volume} {66}},\ \bibinfo
  {pages} {052107} (\bibinfo {year} {2002})}\BibitemShut {NoStop}%
\bibitem [{\citenamefont {Schmitt}\ \emph {et~al.}(2021)\citenamefont
  {Schmitt}, \citenamefont {Gefen}, \citenamefont {Louzon}, \citenamefont
  {Osterkamp}, \citenamefont {Staudenmaier}, \citenamefont {Lang},
  \citenamefont {Markham}, \citenamefont {Retzker}, \citenamefont
  {McGuinness},\ and\ \citenamefont {Jelezko}}]{Sch-Gef-Lou-Ost-2021}%
  \BibitemOpen
  \bibfield  {author} {\bibinfo {author} {\bibfnamefont {S.}~\bibnamefont
  {Schmitt}}, \bibinfo {author} {\bibfnamefont {T.}~\bibnamefont {Gefen}},
  \bibinfo {author} {\bibfnamefont {D.}~\bibnamefont {Louzon}}, \bibinfo
  {author} {\bibfnamefont {C.}~\bibnamefont {Osterkamp}}, \bibinfo {author}
  {\bibfnamefont {N.}~\bibnamefont {Staudenmaier}}, \bibinfo {author}
  {\bibfnamefont {J.}~\bibnamefont {Lang}}, \bibinfo {author} {\bibfnamefont
  {M.}~\bibnamefont {Markham}}, \bibinfo {author} {\bibfnamefont
  {A.}~\bibnamefont {Retzker}}, \bibinfo {author} {\bibfnamefont {L.~P.}\
  \bibnamefont {McGuinness}}, \ and\ \bibinfo {author} {\bibfnamefont
  {F.}~\bibnamefont {Jelezko}},\ }\href
  {https://doi.org/10.1038/s41534-021-00391-5} {\bibfield  {journal} {\bibinfo
  {journal} {npj Quantum Inf.}\ }\textbf {\bibinfo {volume} {7}},\ \bibinfo
  {pages} {1} (\bibinfo {year} {2021})}\BibitemShut {NoStop}%
\bibitem [{\citenamefont {Chiribella}\ \emph
  {et~al.}(2008{\natexlab{b}})\citenamefont {Chiribella}, \citenamefont
  {D'{A}riano},\ and\ \citenamefont {Perinotti}}]{Chi-Dar-Per-2008}%
  \BibitemOpen
  \bibfield  {author} {\bibinfo {author} {\bibfnamefont {G.}~\bibnamefont
  {Chiribella}}, \bibinfo {author} {\bibfnamefont {G.~M.}\ \bibnamefont
  {D'{A}riano}}, \ and\ \bibinfo {author} {\bibfnamefont {P.}~\bibnamefont
  {Perinotti}},\ }\href {https://doi.org/10.1103/PhysRevLett.101.060401}
  {\bibfield  {journal} {\bibinfo  {journal} {Phys. Rev. Lett.}\ }\textbf
  {\bibinfo {volume} {101}},\ \bibinfo {pages} {060401} (\bibinfo {year}
  {2008}{\natexlab{b}})}\BibitemShut {NoStop}%
\bibitem [{\citenamefont {Choi}(1975)}]{Cho-1975}%
  \BibitemOpen
  \bibfield  {author} {\bibinfo {author} {\bibfnamefont {M.-D.}\ \bibnamefont
  {Choi}},\ }\href {https://doi.org/10.1016/0024-3795(75)90075-0} {\bibfield
  {journal} {\bibinfo  {journal} {Lin. Alg. Appl.}\ }\textbf {\bibinfo {volume}
  {10}},\ \bibinfo {pages} {285} (\bibinfo {year} {1975})}\BibitemShut
  {NoStop}%
\bibitem [{\citenamefont {Jamio{\l}kowski}(1972)}]{Jam-1972}%
  \BibitemOpen
  \bibfield  {author} {\bibinfo {author} {\bibfnamefont {A.}~\bibnamefont
  {Jamio{\l}kowski}},\ }\href {https://doi.org/10.1016/0034-4877(72)90011-0}
  {\bibfield  {journal} {\bibinfo  {journal} {Rep. Math. Phys.}\ }\textbf
  {\bibinfo {volume} {3}},\ \bibinfo {pages} {275} (\bibinfo {year}
  {1972})}\BibitemShut {NoStop}%
\bibitem [{\citenamefont {Horn}\ and\ \citenamefont
  {Johnson}(2012)}]{Hor-Joh-2012}%
  \BibitemOpen
  \bibfield  {author} {\bibinfo {author} {\bibfnamefont {R.~A.}\ \bibnamefont
  {Horn}}\ and\ \bibinfo {author} {\bibfnamefont {C.~R.}\ \bibnamefont
  {Johnson}},\ }\href@noop {} {\emph {\bibinfo {title} {Matrix analysis}}}\
  (\bibinfo  {publisher} {Cambridge university press},\ \bibinfo {year}
  {2012})\BibitemShut {NoStop}%
\end{thebibliography}
}

\end{document}